\def\@makecaption#1#2{\vskip\abovecaptionskip
  \sbox\@tempboxa{\small #1: #2}%
  \ifdim \wd\@tempboxa >\hsize \small #1: #2\par
  \else \global \@minipagefalse \hb@xt@\hsize{\hfil\box\@tempboxa\hfil}\fi
  \vskip\belowcaptionskip}
\newtheorem{theorem}{Theorem}[section]
\newtheorem{lemma}[theorem]{Lemma}
\newtheorem{proposition}[theorem]{Proposition}
\newtheorem{definition}[theorem]{Definition}
\theoremstyle{remark}
\newcommand{\Exp}[1]{\operatorname{e}^{#1}}
\newcommand{\C}{\mathbb{C}}
\newcommand{\W}{\mathcal{W}}
\newcommand{\M}{\mathcal{M}}
\renewcommand{\L}{\mathcal{L}}
\newcommand{\N}{\mathbb{N}}
\newcommand{\g}{\mathfrak{g}}
\newcommand{\I}{\mathbb{I}}
\newcommand{\bS}{\mathbb{S}}
\newcommand{\bP}{\mathbb{P}}
\def\({\left(}
\def\){\right)}
\def\[{\begin{eqnarray}}
\def\]{\end{eqnarray}}
\def\d{\partial}
\def\La{\Lambda}
\begin{document}

\title[Strongly coupled extended  Toda]{A strongly coupled extended  Toda hierarchy and its Virasoro symmetry}
\author{Chuanzhong Li}
\address{School of Mathematics and Statistics, Ningbo University,
Ningbo, 315211 Zhejiang, P.\ R.\ China}
\email{lichuanzhong@nbu.edu.cn}


\date{}

\begin{abstract}
As a generalization of the integrable  extended Toda hierarchy and a reduction of the extended multicomponent Toda hierarchy,  from the point of a commutative subalgebra of $gl(2,\C)$, we construct a strongly coupled extended Toda hierarchy(SCETH) which will be proved to possess a Virasoro type additional symmetry by acting on its tau-function. Further we give the multi-fold Darboux transformations of the strongly coupled extended  Toda hierarchy.
\end{abstract}

\maketitle

\allowdisplaybreaks
\maketitle{\allowdisplaybreaks}
{\small \noindent {\bf Mathematics Subject Classifications (2010)}:  37K05, 37K10, 37K20.

\noindent{\bf Key words}:
 strongly coupled extended Toda hierarchy, Additional symmetry, Virasoro Lie algebra.
}
 \setcounter{section}{0}
\section{Introduction}
In the theory of integrable systems, two important fundamental models are the KP and Toda systems from which one can derive a lot of local and nonlocal integrable equations \cite{NLKP,NLKP2,NLKP3,NLTODA,NLTODA2}.
The  Toda lattice hierarchy
  as a completely integrable system  has many important applications in mathematics and physics including the representation theory of Lie algebras, orthogonal polynomials and  random
matrix models  \cite{Toda,Todabook,UT,witten,dubrovin}. The  Toda system has many kinds of reductions or extensions such as the extended Toda hierarchy (ETH)\cite{CDZ,M}, bigraded Toda hierarchy (BTH)\cite{C}-\cite{dispBTH} and so on.
Considering the application in the Gromov-Witten theory, the Toda hierarchy was extended to the extended Toda hierarchy\cite{CDZ} which governs the Gromov-Witten invariants of $\C P^1$.
 The extended bigraded Toda
hierarchy(EBTH) is the extension of the bigraded Toda
hierarchy (BTH) which includes  additional logarithmic flows\cite{C} with considering its application in the Gromov-Witten theory of orbifolds $C_{N,M}$.
  In \cite{EMTH}, the extended flow equations of the multi-component  Toda hierarchy were constructed.  Meanwhile the Darboux transformation and bi-Hamiltonian structure of this new extended multi-component  Toda hierarchy(EMTH) were given.
We considered the Hirota quadratic equation of the commutative subhierarchy of the extended multi-component Toda hierarchy which might be useful in the theory of Frobenius manifolds  in \cite{EZTH}.
In \cite{EZTH}, we constructed the extended flow equations of a new $Z_N$-Toda hierarchy which took values in a commutative subalgebra $Z_N$ of $gl(N,\mathbb C)$. Meanwhile we gave the Hirota bilinear equations and tau functions of the hierarchy which might be useful in the topological field theory and Gromov-Witten theory.

   The additional symmetry is a universal property for integrable systems \cite{os1,dickey1,dickey2}.  Among the algebraic structures of additional symmetries, the Virasoro symmetry is one kind of important symmetries such as \cite{cmp}.
  In \cite{ourBlock}, we provided  a kind of Block type algebraic structures for the bigraded Toda hierarchy (BTH) \cite{ourJMP,solutionBTH}.
Later on, this Block type Lie algebra was found again in the dispersionless bigraded Toda
 hierarchy \cite{dispBTH}.

It was pointed out that the Darboux transformation was an efficient method to generate soliton solutions of integrable equations.  The multi-solitons can be obtained by this Darboux transformation from a trivial seed solution.

This paper will be arranged as follows. In the next section we recall the extended multicomponent Toda hierarchy. In Section 3,
we will give the additional symmetry of the extended multicomponent Toda hierarchy which constitutes a Virasoro type Lie algebra. From the point of a commutative reduction from Lie algebras, the strongly coupled extended Toda hierarchy is recalled in
Section 4. The additional symmetry of the SCETH will be constructed in Section 5 and this symmetry has a Virasoro type structure which includes the Virasoro algebra as a subalgebra.
 The Virasoro action on the tau function of the SCETH will be given in Section 6. Further after that, we give its multi-fold Darboux transformations of the strongly coupled extended  Toda hierarchy.

\section{Extended multicomponent Toda hierarchy}
In this section by following \cite{EMTH}, we will denote $G_m$ as a group which contains invertible elements of $N\times N$
complex matrices and denote its Lie algebra  $\g_m$ as the associative algebra  of  $N\times N$
complex matrices $M_N(\C)(\Lambda)$ where  the shift operator $\Lambda$ acting on any functions $g(x)$ as
$(\Lambda g)(x):=g(x+\epsilon)$.

In this section, firstly let us recall the basic notation of the extended multicomponent Toda hierarchy defined in \cite{EMTH}.

 In \cite{EMTH}, we   define the dressing operators $W,\bar W$ as follows
\begin{align}
\label{def:baker}W&:=S\cdot W_0,& \bar W&:=\bar S\cdot \bar  W_0,
\end{align}
where $S,\bar S$ have expansions of the form
\begin{gather}
\label{expansion-S}
\begin{aligned}
S&=\I_N+\omega_1(x)\Lambda^{-1}+\omega_2(x)\Lambda^{-2}+\cdots\in G_{m-},\\
\bar S&=\bar\omega_0(x)+\bar\omega_1(x)\Lambda+\bar\omega_2(x)\Lambda^{2}+\cdots\in
G_{m+},
\end{aligned}
\end{gather}
and $ G_{m-}$ and  $ G_{m+}$ are two subgroups of $G_{m}$.
The  free operators $ W_0,\bar  W_0\in G_m$ have forms as
\begin{align}
 \label{def:E}  W_0&:=\sum_{k=1}^NE_{kk}\exp\left(\sum_{j=0}^\infty
 t_{jk}\Lambda^{j}+ s_{j}\frac{\Lambda^{j}}{j!}(\epsilon\partial-c_j)\right), \\
\label{def:barE}   \bar W_0&:=\sum_{k=1}^NE_{kk}\exp\left(\sum_{j=0}^\infty\bar
   t_{j k}\Lambda^{-j}+ s_{j}\frac{\Lambda^{-j}}{j!}(\epsilon\partial-c_j)\right),
\end{align}
with $t_{jk}, \bar t_{jk},s_{j} $
 as continuous times.
Also we define the symbols of $S,\bar S$ as  $\bS,\bar \bS$
\begin{gather}
\begin{aligned}
\bS&=\I_N+\omega_1(x)\lambda^{-1}+\omega_2(x)\lambda^{-2}+\cdots,\\
\bar \bS&=\bar\omega_0(x)+\bar\omega_1(x)\lambda+\bar\omega_2(x)\lambda^{2}+\cdots.
\end{aligned}
\end{gather}

Also the inverse operators $S^{-1},\bar S^{-1}$ of operators $S,\bar S$ have expansions of the form
\begin{gather}
\begin{aligned}
S^{-1}&=\I_N+\omega'_1(x)\Lambda^{-1}+\omega'_2(x)\Lambda^{-2}+\cdots\in G_{m-},\\
\bar S^{-1}&=\bar\omega'_0(x)+\bar\omega'_1(x)\Lambda+\bar\omega'_2(x)\Lambda^{2}+\cdots\in
G_{m+}.
\end{aligned}
\end{gather}
Also we define the symbols of $S^{-1},\bar S^{-1}$  as  $\bS^{-1},\bar \bS^{-1}$
\begin{gather}
\begin{aligned}
\bS^{-1}&=\I_N+\omega'_1(x)\lambda^{-1}+\omega'_2(x)\lambda^{-2}+\cdots,\\
\bar \bS^{-1}&=\bar\omega'_0(x)+\bar\omega'_1(x)\lambda+\bar\omega'_2(x)\lambda^{2}+\cdots.
\end{aligned}
\end{gather}

 The Lax  operators $L,C_{kk},\bar C_{kk}\in\g_m$
 are defined by
\begin{align}
\label{Lax}  L&:=W\cdot\Lambda\cdot W^{-1}=\bar W\cdot\Lambda^{-1}\cdot \bar W^{-1}, \\
\label{C} C_{kk}&:=W\cdot E_{kk}\cdot W^{-1},& \bar C_{kk}&:=\bar
W\cdot E_{kk}\cdot \bar W^{-1},
\end{align}
and
have the following expansions
\begin{gather}\label{lax expansion}
\begin{aligned}
 L&=\Lambda+u_1(x)+u_2(x)\Lambda^{-1}, \\
C_{kk}&=E_{kk}+C_{kk,1}(x)\Lambda^{-1}+C_{kk,2}(x)\Lambda^{-2}+\cdots,\\
\bar C_{kk}&=\bar C_{kk,0}(x)+\bar C_{kk,1}(x)\Lambda+\bar
C_{kk,2}(x)\Lambda^{2}+\cdots.
\end{aligned}
\end{gather}
 In fact the Lax  operators $L,C_{kk},\bar C_{kk}\in\g_m$
 can also be equivalently defined by
\begin{align}
\label{two dressing}  L&:=S\cdot\Lambda\cdot S^{-1}=\bar S\cdot\Lambda^{-1}\cdot \bar S^{-1}, \\
\label{C} C_{kk}&:=S\cdot E_{kk}\cdot S^{-1},& \bar C_{kk}&:=\bar
S\cdot E_{kk}\cdot \bar S^{-1}.
\end{align}

The matrix operators $B_{jk},\bar B_{jk},D_{j}$ are defined as follows
\begin{align}\label{satoS}
\begin{aligned}
B_{jk}&:=WE_{kk}\Lambda^jW^{-1},\ \ \bar B_{jk}:=\bar W E_{kk}\Lambda^{-j}\bar W^{-1},\\
D_{j}&:=\frac{2\L^j}{j!}(\log \L-c_j),\ \ c_0=0;\ c_j=\sum_{i=1}^{j}\frac1i,j\geq 1.
\end{aligned}
\end{align}

To define extended flows of the extended multi-component  Toda hierarchy(EMTH), we define the following logarithmic matrices \cite{EMTH}

\begin{align}
\log_+L&=(S\cdot\epsilon \partial\cdot S^{-1})= \epsilon \partial+\sum_{k < 0} W_k(x) \Lambda^k,\\
\log_-L&=-(\bar S\cdot\epsilon \partial\cdot \bar S^{-1})=-\epsilon \partial+\sum_{k \geq 0} W_k(x) \Lambda^k,
\end{align}
where $\d$ is the derivative with respect to the spatial variable $x$.
Combining these above logarithmic operators together can help us in deriving the following important logarithmic matrix
\begin{align}
\label{Log} \log L:&=\frac12\log_+L+\frac12\log_-L=\frac12(S\cdot\epsilon \partial\cdot S^{-1}-\bar S\cdot\epsilon \partial\cdot \bar S^{-1}),
\end{align}
which will generate a series of extended flow equations  contained in the following Lax equations.
\begin{proposition}\label{Lax}
 The  Lax equations of the EMTH are as follows
   \begin{align}
\label{laxtjk}
  \epsilon\partial_{t_{jk}} L&= [(B_{jk})_+,L],&
 \epsilon\partial_{t_{jk}} C_{ss}&= [-(B_{jk})_-,C_{ss}],&\epsilon\partial_{t_{jk}} \bar C_{ss}&= [(B_{jk})_+,\bar C_{ss}],
\\
  \epsilon\partial_{\bar t_{jk}} L&= [ (\bar B_{jk})_+,L],&
 \epsilon\partial_{\bar t_{jk}} C_{ss}&= [-(\bar B_{jk})_-,C_{ss}],&\epsilon\partial_{\bar t_{jk}} \bar C_{ss}&= [(\bar B_{jk})_+,\bar C_{ss}],
\\
 \epsilon\partial_{s_{j}} L&= [(D_{j})_+,L],&
  \epsilon\partial_{s_{j}} C_{ss}&= [-(D_{j})_-,C_{ss}],& \epsilon\partial_{s_{j}} \bar C_{ss}&= [(D_{j})_+,\bar C_{ss}].
  \end{align}
\end{proposition}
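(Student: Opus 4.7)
The plan is to pass through Sato-type equations for the dressing operators $S$ and $\bar S$ and then deduce the Lax equations by direct differentiation. For each flow variable $\tau\in\{t_{jk},\bar t_{jk},s_j\}$, I would compute $(\ep\d_\tau W)W^{-1}$ from $W=SW_0$ and $(\ep\d_\tau\bar W)\bar W^{-1}$ from $\bar W=\bar S\bar W_0$. In the case $\tau=t_{jk}$, differentiation of (\ref{def:baker}) and (\ref{def:E}) yields
\begin{equation*}
(\ep\d_{t_{jk}}W)W^{-1}=(\ep\d_{t_{jk}}S)S^{-1}+SE_{kk}\La^{j}S^{-1}=(\ep\d_{t_{jk}}S)S^{-1}+B_{jk},
\end{equation*}
in which the first summand has only strictly negative powers of $\La$ (since $S=\I_N+O(\La^{-1})$); meanwhile $\bar W_0$ is independent of $t_{jk}$, so $(\ep\d_{t_{jk}}\bar W)\bar W^{-1}=(\ep\d_{t_{jk}}\bar S)\bar S^{-1}$ has only non-negative powers of $\La$.

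The consistency $L=W\La W^{-1}=\bar W\La^{-1}\bar W^{-1}$ forces these two generators to coincide, and the unique splitting of $\g_m$ into strictly-negative plus non-negative powers of $\La$ then pins them down as
\begin{equation*}
(\ep\d_{t_{jk}}S)S^{-1}=-(B_{jk})_-,\qquad (\ep\d_{t_{jk}}\bar S)\bar S^{-1}=(B_{jk})_+.
\end{equation*}
An identical argument for $\bar t_{jk}$ replaces $B_{jk}$ by $\bar B_{jk}$. The logarithmic flow $s_j$ is the main technical obstacle, because $s_j$ appears in both $W_0$ and $\bar W_0$. Here I would compute $\ep\d_{s_j}W_0\cdot W_0^{-1}$ and $\ep\d_{s_j}\bar W_0\cdot\bar W_0^{-1}$, conjugate them by $S$ and $\bar S$ respectively to produce operators built from $\log_+L$ and $\log_-L$, and then check via (\ref{Log}) and $L=S\La S^{-1}=\bar S\La^{-1}\bar S^{-1}$ that the resulting combination reproduces $D_j=\frac{2L^j}{j!}(\log L-c_j)$. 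Once this identification is in place, the same splitting argument yields $(\ep\d_{s_j}S)S^{-1}=-(D_j)_-$ and $(\ep\d_{s_j}\bar S)\bar S^{-1}=(D_j)_+$.

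With these Sato equations in hand, the Lax equations follow from direct differentiation of $L=S\La S^{-1}$, $C_{ss}=SE_{ss}S^{-1}$ and $\bar C_{ss}=\bar SE_{ss}\bar S^{-1}$. For example,
\begin{equation*}
\ep\d_{t_{jk}}L=[(\ep\d_{t_{jk}}S)S^{-1},L]=[-(B_{jk})_-,L]=[(B_{jk})_+,L],
\end{equation*}
the last equality using $[B_{jk},L]=W[E_{kk}\La^j,\La]W^{-1}=0$. The equations for $C_{ss}$ and $\bar C_{ss}$ are obtained from the Leibniz identity
\begin{equation*}
\ep\d_\tau(XE_{ss}X^{-1})=[(\ep\d_\tau X)X^{-1},XE_{ss}X^{-1}]
\end{equation*}
with $X=S$ or $X=\bar S$ and the corresponding Sato equation substituted on the right; the commutation identities $[\bar B_{jk},L]=0=[D_j,L]$ then handle the $\bar t_{jk}$ and $s_j$ cases by the same pattern. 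The only non-mechanical ingredient in the entire argument is the logarithmic identification needed to produce $(D_j)_\pm$; once that is established, the remaining Lax equations reduce to routine commutator manipulations.
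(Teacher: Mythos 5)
First, a point of reference: the paper offers no proof of this proposition at all --- it is quoted from the cited work on the extended multi-component Toda hierarchy --- so your attempt can only be measured against the standard derivation, which is also the pattern the paper follows for the SCETH (Sato equations for the dressing operators, then Lax equations by dressing). Your overall architecture is exactly that derivation: compute $(\ep\d_\tau W)W^{-1}$ and $(\ep\d_\tau \bar W)\bar W^{-1}$, equate them, split into strictly-negative and non-negative parts, then differentiate $L=S\La S^{-1}$, $C_{ss}=SE_{ss}S^{-1}$, $\bar C_{ss}=\bar S E_{ss}\bar S^{-1}$. The closing stage is carried out correctly: the Leibniz identity, $[B_{jk},L]=W[E_{kk}\La^j,\La]W^{-1}=0$, and $(B_{jk})_+=B_{jk}-(B_{jk})_-$ are all fine.

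There are, however, two genuine gaps. (1) Your justification for why the two generators coincide is wrong as stated. The consistency $L=W\La W^{-1}=\bar W\La^{-1}\bar W^{-1}$ only yields, upon differentiation, $\bigl[(\ep\d_\tau W)W^{-1}-(\ep\d_\tau\bar W)\bar W^{-1},\,L\bigr]=0$, and the centralizer of $L$ is large (it contains every constant-coefficient combination $\sum_{m,k}a_{mk}B_{mk}$, since each $B_{mk}$ commutes with $L$), so the difference need not vanish. What actually forces equality of the generators is the time-independent factorization problem $W\circ g=\bar W$ (the analogue of the SCETH relation $P\circ\W_0\circ g=\bar P\circ\bar\W_0$, and the definition of the EMTH in the cited source): differentiating $Wg=\bar W$ with $g$ constant gives the identity in one line. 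Without this input the flows on $S,\bar S$ are not even defined, so it cannot be omitted. (2) You defer precisely the step you yourself identify as the only non-mechanical one, and it does not go through as asserted. With the paper's definitions of $W_0$ and $\bar W_0$, the conjugated generators are $\frac{L^j}{j!}(\log_+L-c_j)$ and $\frac{L^j}{j!}(-\log_-L-c_j)$; in their difference the two $c_j$-terms \emph{cancel}, leaving $\frac{2L^j}{j!}\log L$ rather than $D_j=\frac{2L^j}{j!}(\log L-c_j)$. Since $(L^j)_-\neq 0$ for $j\geq 1$, the discrepancy $\frac{2c_j}{j!}(L^j)_-$ survives the splitting, so the conclusion $(\ep\d_{s_j}S)S^{-1}=-(D_j)_-$ does not follow from the argument you outline; one must either adjust the placement of the constants $c_j$ in $W_0,\bar W_0$ or absorb the extra term into a linear combination of the $t_{jk}$-flows. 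This is exactly where the proof can fail, so it has to be written out rather than asserted.
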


\section{Strongly coupled extended Toda hierarchy}
In this section, we firstly construct a strongly coupled extended Toda hierarchy.
The algebra has a maximal symmetric commutative subalgebra $S_2=\C[\Gamma]$ and $\Gamma=\begin{bmatrix}0&1\\ 1&0\end{bmatrix}\in gl(2,\C).$
Denote $\Lambda$ as a shift operator by acting on any function $f(x)$ as $\Lambda f(x)=f(x+\epsilon)$,  and an algebra $S_2(\Lambda):=\g$, then the algebra $\g$
 has the following splitting
\begin{gather}\label{splittingc}
\g=\g_{+}\oplus\g_{-},
\end{gather}
where
\begin{align*}
  \g_{+}&=\Big\{\sum_{j\geq 0}X_j(x)\Lambda^j,\quad X_j(x)\in S_2\Big\},&
  \g_{-}&=\Big\{\sum_{j< 0}X_j(x)\Lambda^j,\quad X_j(x)\in S_2\Big\}.
\end{align*}
The splitting \eqref{splittingc} leads us to consider the following factorization of
$g\in G$
\begin{gather}\label{fac1}
g=g_{-}^{-1}\circ g_{+}, \quad g_{\pm}\in G_{\pm},
\end{gather}
where $G_{\pm}$ have $\g_{\pm}$ as their Lie algebras. $G_{+}$
is the set of invertible linear operators  of the
form $\sum_{j\geq 0}g_j(x)\Lambda^j$; while $G_{-}$ is the set of
invertible linear operators of the form
$\I_2+\sum_{j<0}g_j(x)\Lambda^j$.

 Now we
introduce  the following free operators $ \W_0,\bar  \W_0\in G$
\begin{align}
 \label{def:E}  \W_{0}&:=\exp{\sum_{j=0}^\infty
 t_{j}\frac{\Lambda^j}{\epsilon j!}+ y_{j}\frac{\Lambda^j}{\epsilon j!}(\epsilon \partial-c_j)},\ \ \partial=\frac{\partial}{\partial  x}, \\
\label{def:barE}   \bar \W_{0}&:=\exp{\sum_{j=0}^\infty
   t_{j}\frac{\Lambda^{-j}}{\epsilon j!}+ y_{j}\frac{\Lambda^{-j}}{\epsilon j!}(\epsilon \partial-c_j)}, \ \ c_j=\sum_{i=1}^j\frac 1i,
\end{align}
where $t_{j}, y_{j} \in \C$
will play the role of continuous times.

 We   define the dressing operators $\W,\bar \W$ as follows
\begin{align}
\label{def:baker}\W&:=P\circ \W_0,\ \  \bar \W:=\bar P\circ \bar  \W_0,\quad P\in G_{-},\ \bar P\in G_{+}.
\end{align}
Given an element $g\in G$ and denote $t=(t_{j}), y=(y_{j}); j\mathbb\in \N$, one can consider the factorization problem  in $G$
\begin{gather}
  \label{facW}
  \W\circ g=\bar \W,
\end{gather}
i.e.
 the factorization problem
\begin{gather}
  \label{factorization}
  P(t,y)\circ \W_0\circ g=\bar P(t,y)\circ\bar \W_0.
\end{gather}
Observe that  $P,\bar P$ have expansions of the form
\begin{gather}
\label{expansion-S}
\begin{aligned}
P&=\I_2+\omega_1(x)\Lambda^{-1}+\omega_2(x)\Lambda^{-2}+\cdots\in G_{-},\\
\bar P&=\bar\omega_0(x)+\bar\omega_1(x)\Lambda+\bar\omega_2(x)\Lambda^{2}+\cdots\in
G_{+}.
\end{aligned}
\end{gather}
Also we define the symbols of $P,\bar P$ as  $\bP,\bar \bP$
\begin{gather}
\begin{aligned}
\bP&=\I_2+\omega_1(x)\lambda^{-1}+\omega_2(x)\lambda^{-2}+\cdots=\begin{bmatrix}\bP_0&\bP_1\\ \bP_1&\bP_0\end{bmatrix},\\
\bar \bP&=\bar\omega_0(x)+\bar\omega_1(x)\lambda+\bar\omega_2(x)\lambda^{2}+\cdots=\begin{bmatrix}\bar \bP_0&\bar \bP_1\\\bar \bP_1&\bar \bP_0\end{bmatrix}.
\end{aligned}
\end{gather}

The inverse operators $P^{-1},\bar P^{-1}$ of operators $P,\bar P$ have expansions of the form
\begin{gather}
\begin{aligned}
P^{-1}&=\I_2+\omega'_1(x)\Lambda^{-1}+\omega'_2(x)\Lambda^{-2}+\cdots\in G_{-},\\
\bar P^{-1}&=\bar\omega'_0(x)+\bar\omega'_1(x)\Lambda+\bar\omega'_2(x)\Lambda^{2}+\cdots\in
G_{+}.
\end{aligned}
\end{gather}
Also we define the symbols of $P^{-1},\bar P^{-1}$  as  $\bP^{-1},\bar \bP^{-1}$
\begin{gather}
\begin{aligned}
\bP^{-1}&=\I_2+\omega'_1(x)\lambda^{-1}+\omega'_2(x)\lambda^{-2}+\cdots=\begin{bmatrix}\bP_2&\bP_3\\ \bP_3&\bP_2\end{bmatrix},\\
\bar \bP^{-1}&=\bar\omega'_0(x)+\bar\omega'_1(x)\lambda+\bar\omega'_2(x)\lambda^{2}+\cdots=\begin{bmatrix}\bar\bP_2&\bar\bP_3\\ \bar\bP_3&\bar\bP_2\end{bmatrix}.
\end{aligned}
\end{gather}

 The Lax  operators $\L\in G$
 are defined by
\begin{align}
\label{LaxZ}  \L&:=\W\circ\Lambda\circ \W^{-1}=\bar \W\circ\Lambda^{-1}\circ \bar \W^{-1}=P\circ\Lambda\circ P^{-1}=\bar P\circ\Lambda^{-1}\circ \bar P^{-1},
\end{align}
and
have the following expansions
\begin{gather}\label{lax expansion}
\begin{aligned}
 \L&=\Lambda+u(x)+v(x)\Lambda^{-1}=\begin{bmatrix}\Lambda+u_0(x)+v_0(x)\Lambda^{-1}&u_1(x)+v_1(x)\Lambda^{-1}\\ u_1(x)+v_1(x)\Lambda^{-1}&\Lambda+u_0(x)+v_0(x)\Lambda^{-1}\end{bmatrix}.
\end{aligned}
\end{gather}

Now we define the following two logarithm matrices
\begin{align}
\log_+\L&=\W\circ\epsilon \partial\circ \W^{-1}=P\circ\epsilon \partial\circ P^{-1},\\
\log_-\L&=-\bar \W\circ\epsilon \partial\circ \bar \W^{-1}=-\bar P\circ\epsilon \partial\circ \bar P^{-1}.
\end{align}

Combining these above logarithm operators together can derive following important logarithm matrix
\begin{align}
\label{Log} \log \L:&=\frac12(\log_+\L+\log_-\L)=\frac12(P\circ\epsilon \partial\circ P^{-1}-\bar P\circ\epsilon \partial\circ \bar P^{-1}):=\sum_{i=-\infty}^{+\infty}\W_i\Lambda^i\in G,
\end{align}
which will generate a series of flow equations which contain the spatial flow in later defined Lax equations.

Let us first introduce some convenient notations of $S_2$-valued matrix operators $B_{j},F_{j}$ as follows
\begin{align}\label{satoS}
\begin{aligned}
B_{j}&:=\frac{\L^{j+1}}{(j+1)!},\ \
F_{j}:=\frac{2\L^j}{j!}(\log \L-c_j),\ \  c_j=\sum_{i=1}^j\frac 1i,\ j\geq 0.
\end{aligned}
\end{align}

Now we give the definition of the strongly coupled extended Toda hierarchy(SCETH).
\begin{definition}The strongly coupled extended Toda hierarchy is a hierarchy in which the dressing operators $P,\bar P$ satisfy the following Sato equations
\begin{align}
\label{satoStz} \epsilon\partial_{t_{j}}P&=-(B_{j})_-P,& \epsilon\partial_{t_{j}}\bar P&=(B_{j})_+\bar P,  \\
\label{satoSsz}\epsilon\partial_{ y_{j}}P&=-(F_{j})_- P,& \epsilon\partial_{y_{j}}\bar P&=(F_{j})_+\bar P,\end{align}
which is equivalent to that the dressing operators $W,\bar W$ are subject to the following Sato equations
\begin{align}
\label{Wjkz} \epsilon\partial_{t_{j}}W&=(B_{j})_+ W,& \epsilon\partial_{t_{j}}\bar W&=(B_{j})_+\bar W,  \\
\epsilon\partial_{y_{j}}W&=(\frac{\L^j}{j!}(\log_+ \L-c_j) -(F_{j})_-) W,& \epsilon\partial_{y_{j}}\bar W&=(-\frac{\L^j}{j!}(\log_- \L-c_j)+(F_{j})_+)\bar W.  \end{align}
\end{definition}

 From the previous proposition we derive the following  Lax equations for the Lax operators.
\begin{proposition}\label{Lax}
 The  Lax equations of the SCETH are as follows
   \begin{align}
\label{laxtjk}
  \epsilon\partial_{t_{j}} \L&= [(B_{j})_+,\L],&
  \epsilon\partial_{y_{j}} \L&= [(F_{j})_+,\L],\
  \epsilon\partial_{ t_{j}} \log \L= [(B_{j})_+ ,\log \L],&
  \end{align}
   \begin{align}\epsilon(\log \L)_{ y_{j}}=[ -(F_{j})_-,\log_+ \L ]+
[(F_{j})_+ ,\log_- \L ].
\end{align}
\end{proposition}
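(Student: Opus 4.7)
The plan is to deduce each Lax equation directly from the Sato equations \eqref{satoStz}--\eqref{satoSsz} by differentiating the ``conjugation'' formulas for $\L$ and $\log \L$ and then repeatedly using commutativity of $\L$ with itself (hence with $B_j$) and with $\log \L$.

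First I would handle the $t_j$-flow for $\L$. Starting from $\L=P\circ\Lambda\circ P^{-1}$, differentiate under $\epsilon\partial_{t_j}$ using $\epsilon\partial_{t_j}P=-(B_j)_-P$ and the induced identity $\epsilon\partial_{t_j}P^{-1}=P^{-1}(B_j)_-$. A short cancellation produces $\epsilon\partial_{t_j}\L=[\L,(B_j)_-]$. Since $B_j=\L^{j+1}/(j+1)!$ commutes with $\L$, splitting $B_j=(B_j)_++(B_j)_-$ gives $[(B_j)_+,\L]=-[(B_j)_-,\L]$, which upgrades the identity to $\epsilon\partial_{t_j}\L=[(B_j)_+,\L]$. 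As a consistency check I would repeat the same computation using the barred dressing $\L=\bar P\circ\Lambda^{-1}\circ\bar P^{-1}$ with $\epsilon\partial_{t_j}\bar P=(B_j)_+\bar P$, which yields the same answer directly. The $y_j$-flow for $\L$ is analogous: the crucial input is $[F_j,\L]=0$, which follows from the commutativity $[\log\L,\L]=0$ (see below), so that again $[(F_j)_-,\L]=-[(F_j)_+,\L]$.

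Next I would address the logarithmic equations. The starting point is the decomposition $\log \L=\tfrac12(\log_+\L+\log_-\L)=\tfrac12(P\circ\epsilon\partial\circ P^{-1}-\bar P\circ\epsilon\partial\circ\bar P^{-1})$. Differentiating each half-logarithm as before, I get
\begin{align*}
\epsilon\partial_{t_j}(\log_+\L) &= [\log_+\L,(B_j)_-], \\
\epsilon\partial_{t_j}(\log_-\L) &= [(B_j)_+,\log_-\L].
\end{align*}
The key observation here is that $\log_\pm\L$ each commute with $\L$ because $[\epsilon\partial,\Lambda]=0$; hence $[\log_\pm\L,B_j]=0$, which converts the first bracket into $[(B_j)_+,\log_+\L]$. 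Adding and halving gives $\epsilon\partial_{t_j}\log\L=[(B_j)_+,\log\L]$. The $y_j$-flow for $\log\L$ is derived by the same procedure using the Sato equations with $F_j$; this time one does \textbf{not} combine the two half-logarithm commutators into a single one, because the bracket $[F_j,\log_\pm\L]$ is what breaks the symmetry (the two halves see different ``positive/negative'' projections), leading to the stated mixed form $[-(F_j)_-,\log_+\L]+[(F_j)_+,\log_-\L]$.

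The only real obstacle is bookkeeping: one must verify the commutativity assertions $[\log_\pm\L,\L]=0$, and hence $[\log\L,\L]=0$ and $[F_j,\L]=0$, in order to move freely between the $(\cdot)_+$ and $(\cdot)_-$ parts of $B_j$ and $F_j$. This reduces to the identity $[\epsilon\partial,\Lambda]=0$ conjugated by $P$ or $\bar P$. Once these commutation relations are in hand, each Lax equation is a one-line consequence of the Sato equation together with the derivation rule $\epsilon\partial_\star P^{-1}=-P^{-1}(\epsilon\partial_\star P)P^{-1}$; no further computation is required.
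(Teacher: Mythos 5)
Your overall strategy coincides with the paper's: the paper in fact prints no proof of this proposition at all (it only asserts that the Lax equations follow from the Sato equations of the preceding Definition), and the derivation you describe --- differentiate $\L=P\circ\Lambda\circ P^{-1}=\bar P\circ\Lambda^{-1}\circ\bar P^{-1}$ and the conjugated expressions for $\log_\pm\L$, substitute the Sato equations, and use $[B_j,\L]=[F_j,\L]=[\log_\pm\L,\L]=0$ (all reducible to $[\epsilon\partial,\Lambda]=0$ under conjugation) to replace $-(\cdot)_-$ by $(\cdot)_+$ --- is exactly the standard argument that fills this in. Your treatment of the first three equations is correct and complete.

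The last equation needs one correction. Your own procedure gives $\epsilon\partial_{y_j}\log_+\L=[-(F_j)_-,\log_+\L]$ and $\epsilon\partial_{y_j}\log_-\L=[(F_j)_+,\log_-\L]$, and since $\log\L=\tfrac12(\log_+\L+\log_-\L)$ you must add \emph{and halve}, exactly as you correctly did for the $t_j$-flow of $\log\L$; this yields $\epsilon(\log\L)_{y_j}=\tfrac12\bigl([-(F_j)_-,\log_+\L]+[(F_j)_+,\log_-\L]\bigr)$, which is one half of the right-hand side printed in the proposition. As written, your sum computes $\epsilon\partial_{y_j}(\log_+\L+\log_-\L)=\epsilon\partial_{y_j}(2\log\L)$. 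You are right that the two half-logarithm brackets cannot be merged into a single bracket against $\log\L$ (since $[\log_+\L,\log_-\L]\neq0$, $F_j$ does not commute with $\log_\pm\L$ separately), but that observation does not remove the factor $\tfrac12$. The discrepancy traces back to the normalization in the paper's statement (the analogous formulas in the ETH/EBTH literature carry the $\tfrac12$'s), so you should either keep the $\tfrac12$ explicitly or note that the printed formula is off by a factor of two; simply asserting that the procedure ``leads to the stated mixed form'' glosses over this.
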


\subsection{Strongly coupled extended Toda equations}
 As a consequence of the factorization problem \eqref{facW} and  Sato equations, after taking into account that   $S\in G_{-}$ and $\bar S\in G_{+}$, the $t_0$ flow of $\L$ in the form of $\L=\Lambda+U+V\Lambda^{-1}$ is as
\begin{gather}\label{exp-omega}
\begin{aligned}
  \epsilon\partial_{t_{0}} \L&= [\Lambda+U,V\Lambda^{-1}],
  \end{aligned}
\end{gather}
which lead to a strongly coupled Toda equation
\[\epsilon\partial_{t_{0}} U&=& V(x+\epsilon)-V(x),\\ \label{toda}
\epsilon\partial_{t_{0}} V&=& U(x)V(x)-V(x)U(x-\epsilon).\]
Of course, one can switch the order of the matrices because of the commutativity of $S_2$.
Suppose
\[U=\begin{bmatrix}u_0&u_1\\ u_1&u_0\end{bmatrix},\ \ V=\begin{bmatrix}v_0&v_1\\ v_1&v_0\end{bmatrix},\]
then the specific strongly coupled  Toda equation is
\[\epsilon\partial_{t_{0}} u_0&=& v_0(x+\epsilon)-v_0(x), \\
\epsilon\partial_{t_{0}} u_1&=& v_1(x+\epsilon)-v_1(x),\\ \label{stoda}
\epsilon\partial_{t_{0}} v_0&=& u_0(x)v_0(x)+u_1(x)v_1(x)-v_0(x)v_0(x-\epsilon)-v_1(x)v_1(x-\epsilon),\\
\epsilon\partial_{t_{0}} v_1&=&(u_1(x)-u_1(x-\epsilon))v_0(x)-v_1(x)(u_0(x)-u_0(x-\epsilon).\]
 To get the standard strongly coupled Toda equation, one need to use the alternative expressions
\begin{gather}\label{exp-omega1}
\begin{aligned}
  U_0&:=\omega_1(x)-\omega_1(x+\epsilon)=\epsilon\partial_{t_1}\phi_0(x),\\
  U_1&:=\bar\omega_1(x)-\bar\omega_1(x+\epsilon)=\epsilon\partial_{t_1}\phi_1(x),\\
 V_0&= \Exp{\phi_0(x)-\phi_0(x-\epsilon)}\cosh(\phi_1(x)-\phi_1(x-\epsilon))=-\epsilon\partial_{t_1}\omega_1(x),\\
 V_1&= \Exp{\phi_0(x)-\phi_0(x-\epsilon)}\sinh(\phi_1(x)-\phi_1(x-\epsilon))=-\epsilon\partial_{t_1}\bar\omega_1(x).
\end{aligned}
\end{gather}

From Sato equations we deduce the following set of nonlinear
partial differential-difference equations
\begin{align}\left\{
\begin{aligned}
 \omega_1(x)-\omega_1(x+\epsilon)&=\epsilon\partial_{t_1}\phi_0(x),\\
  \bar\omega_1(x)-\bar\omega_1(x+\epsilon)&=\epsilon\partial_{t_1}\phi_1(x),\\
\epsilon\partial_{t_1}\omega_1(x)&=-\Exp{\phi(x)-\phi(x-\epsilon)}\cosh(\phi_1(x)-\phi_1(x-\epsilon))\\
\epsilon\partial_{t_1}\bar\omega_1(x)&=-\Exp{\phi(x)-\phi(x-\epsilon)}\sinh(\phi_1(x)-\phi_1(x-\epsilon)).\end{aligned}\right.
\label{eq:multitoda}
\end{align}
Observe that if we cross the above equations, then we get
the following strongly coupled Toda system
\begin{align*}
  \epsilon^2\partial_{t_1}^2\phi_0(x)&=
  \Exp{\phi_0(x+\epsilon)-\phi_0(x)}\cosh(\phi_0(x+\epsilon)-\phi_0(x))-\Exp{\phi_0(x)-\phi_0(x-\epsilon)}\cosh(\phi_0(x)-\phi_0(x-\epsilon)),\\
  \epsilon^2\partial_{t_1}^2\phi_1(x)&=
  \sinh(\phi_1(x+\epsilon)-\phi_1(x))\Exp{\phi_0(x+\epsilon)-\phi_0(x)}-  \sinh(\phi_1(x)-\phi_1(x-\epsilon))\Exp{\phi_0(x)-\phi_0(x-\epsilon)}.\\
\end{align*}

Besides above strongly coupled Toda equations, with logarithm flows the SCETH also contains some extended flow equations in the next part.
Here we consider  the extended flow equations in the simplest case, i.e. the $y_{0}$ flow for $\L,$
\[\epsilon \d_{y_{0}}\L&=&[(S\epsilon \d_x S^{-1})_+,\L]\\
&=&[\epsilon \d_xS S^{-1},\L]\\
&=&\epsilon\L_x,\]
which leads to  the following specific equation
\[\d_{y_{0}}U&=& U_{x} ,\ \ \d_{y_{0}}V= V_{x} .\]

To see the extended equations clearly, one need to rewrite the extended flows  in the Lax equations of the SCETH as in the following lemma.

\begin{lemma}\label{modifiedLax}

The extended flows in Lax formulation of the SCETH can be equivalently given
by
\begin{equation}
  \label{edef2}
\epsilon\frac{\partial \L}{\partial y_{j}} = [D_{j} ,\L ],
\end{equation}
\begin{align}
  &D_{j} = (\frac{\L^j}{j!}(\log_+ \L-c_j))_+-(\frac{\L^j}{j!}(\log_- \L-c_j))_-,
\end{align}
which can also be rewritten in the form
\begin{equation}
  \label{edef2'}
\epsilon \frac{\partial \L}{\partial y_n} = [\bar  D_n ,\L ],
\end{equation}
\begin{align} \notag
 \bar D_{j} &= \frac{\L^j}{j!}\epsilon \d+ [ \frac{\L^j}{j!} (\sum_{k < 0} W_k(x) \Lambda^k -c_j)]_+-[ \frac{\L^j}{j!}  (\sum_{k \geq 0} W_k(x) \Lambda^k -c_j)]_-.
\end{align}

\end{lemma}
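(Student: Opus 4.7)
The plan is to start from the $y_j$ Lax equation $\epsilon\partial_{y_j}\L=[(F_j)_+,\L]$ of Proposition \ref{Lax}, with $F_j=\frac{2\L^j}{j!}(\log\L-c_j)$, and derive both reformulations by exploiting that every ingredient built from $\L^j$ and the individual logarithms commutes with $\L$. The first step is to split $F_j$ using $\log\L=\tfrac12(\log_+\L+\log_-\L)$: set $A_j=\frac{\L^j}{j!}(\log_+\L-c_j)$ and $B_j=\frac{\L^j}{j!}(\log_-\L-c_j)$, so that $F_j=A_j+B_j$. From $\log_+\L=P\circ\epsilon\partial\circ P^{-1}$ and $\L=P\circ\Lambda\circ P^{-1}$ together with $[\epsilon\partial,\Lambda]=0$, conjugation gives $[\log_+\L,\L]=0$; the parallel computation with $\bar P$ yields $[\log_-\L,\L]=0$. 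Consequently $[A_j,\L]=[B_j,\L]=0$.

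To obtain \eqref{edef2} I would then compare $D_j=(A_j)_+-(B_j)_-$ with $(F_j)_+=(A_j)_++(B_j)_+$. Their difference is $D_j-(F_j)_+=-(B_j)_+-(B_j)_-=-B_j$, and since $B_j$ commutes with $\L$ by the previous step, $[D_j,\L]=[(F_j)_+,\L]$, which is precisely the claim. For \eqref{edef2'}, the plan is to insert the explicit expansions $\log_+\L=\epsilon\partial+\sum_{k<0}W_k\Lambda^k$ and $\log_-\L=-\epsilon\partial+\sum_{k\geq 0}W_k\Lambda^k$ directly into the definition of $D_j$. The shift-operator pieces reproduce the two bracketed terms of $\bar D_j$, while the derivation pieces assemble as $(\frac{\L^j}{j!}\epsilon\partial)_+ + (\frac{\L^j}{j!}\epsilon\partial)_- = \frac{\L^j}{j!}\epsilon\partial$, giving the leading term of $\bar D_j$. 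The outcome is actually the stronger identity $D_j=\bar D_j$ at the level of operators, from which \eqref{edef2'} is immediate.

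The main obstacle I anticipate is not algebraic but notational: strictly speaking $\epsilon\partial\notin\g=S_2(\Lambda)$, so the splitting $\g=\g_+\oplus\g_-$ must be extended to operators of the form $X\circ\epsilon\partial$ with $X\in\g$. Since $\epsilon\partial$ commutes with every power of $\Lambda$, the natural convention is to split such an operator by the $\Lambda$-degree of the factor $X$, carrying $\epsilon\partial$ along as a scalar; both sides of the lemma implicitly rely on this convention, and once it is fixed the identity $(\frac{\L^j}{j!}\epsilon\partial)_+ + (\frac{\L^j}{j!}\epsilon\partial)_- = \frac{\L^j}{j!}\epsilon\partial$ becomes tautological and the whole reduction collapses to a one-line expansion.
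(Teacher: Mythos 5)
Your argument is correct and is exactly the standard derivation for this kind of lemma: the paper itself states Lemma \ref{modifiedLax} without proof, but the intended justification (as in the extended Toda and EMTH literature it follows) is precisely your splitting $F_j=\frac{\L^j}{j!}(\log_+\L-c_j)+\frac{\L^j}{j!}(\log_-\L-c_j)$ together with $[\log_\pm\L,\L]=0$, so that $D_j-(F_j)_+$ is a full operator commuting with $\L$, and the second form is just the substitution of the explicit expansions of $\log_\pm\L$. Your remark about the convention $(X\circ\epsilon\partial)_\pm=X_\pm\circ\epsilon\partial$ is the right way to make the projections of $\epsilon\partial$-bearing operators well defined, and it is indeed the convention the lemma's formula for $\bar D_j$ presupposes.
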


Then one can derive the $y_1$ flow equation of the SCETH as
\[\notag &\epsilon u_{0 y_1}=\frac\epsilon 2(1-\Lambda)(v_0(\Lambda^{-1}-1)^{-1}(\log (v_0+v_1)+\log (v_0-v_1))_x)-2(\Lambda-1)v_0\\
&+\frac\epsilon 2(1-\Lambda)(v_1(\Lambda^{-1}-1)^{-1}(\log (v_0+v_1)-\log (v_0-v_1))_x)+\frac{\epsilon}2u_{0x}^2+\frac{\epsilon}2u_{1x}^2+\epsilon v_{0x},\\
\notag &\epsilon u_{1 y_1}=\frac\epsilon 2(1-\Lambda)(v_0(\Lambda^{-1}-1)^{-1}(\log (v_0+v_1)-\log (v_0-v_1))_x)-2(\Lambda-1)v_1\\
&+\frac\epsilon 2(1-\Lambda)(v_1(\Lambda^{-1}-1)^{-1}(\log (v_0+v_1)+\log (v_0-v_1))_x)+\epsilon u_{0x}u_{1x}+\epsilon v_{1x},\]
\[
\notag &\epsilon v_{0y_1}=((\Lambda^{-1}-1)^{-1}\frac\epsilon 2 (\log (v_0+v_1)+\log (v_0-v_1))_x)+2)[(u_0(x-\epsilon)-u_0(x))v_0+(u_1(x-\epsilon)-u_1(x))v_1]\\
\notag &(\Lambda^{-1}-1)^{-1}\frac\epsilon 2 (\log (v_0+v_1)-\log (v_0-v_1))_x)[(u_1(x-\epsilon)-u_1(x))v_0+(u_1(x-\epsilon)-u_1(x))v_0]\\
&\notag+\epsilon v_{0x}u_0(x-\epsilon)+\epsilon v_{1x}u_1(x-\epsilon)+\epsilon (u_{0x}(x-\epsilon)+ u_{0x}(x))v_0+\epsilon (u_{1x}(x-\epsilon)+ u_{1x}(x))v_1,\\
\notag &\epsilon v_{1y_1}=((\Lambda^{-1}-1)^{-1}\frac\epsilon 2 (\log (v_0+v_1)-\log (v_0-v_1))_x)+2)[(u_0(x-\epsilon)-u_0(x))v_0+(u_1(x-\epsilon)-u_1(x))v_1]\\
\notag &(\Lambda^{-1}-1)^{-1}\frac\epsilon 2 (\log (v_0+v_1)+\log (v_0-v_1))_x)[(u_1(x-\epsilon)-u_1(x))v_0+(u_1(x-\epsilon)-u_1(x))v_0]\\
&\notag+\epsilon v_{1x}u_0(x-\epsilon)+\epsilon v_{0x}u_1(x-\epsilon)+\epsilon (u_{1x}(x-\epsilon)+ u_{1x}(x))v_0+\epsilon (u_{0x}(x-\epsilon)+ u_{0x}(x))v_1,\]
where $u_i,v_i$ without bracket behind them means $u_i(x),v_i(x)$ respectively.

\section{Virasoro symmetries of the strongly coupled extended Toda hierarchy}

In this section, we will put constrained condition
eq.\eqref{LaxZ} into a construction of the flows of additional
symmetries which form the well-known Virasoro algebra.

With the dressing operators given in eq.\eqref{LaxZ}, we introduce Orlov-Schulman operators as following
\begin{eqnarray}\label{Moperator}
&&\M=P\Gamma P^{-1}, \ \ \bar \M=\bar P\bar \Gamma \bar P^{-1},\ \\
 &&\Gamma=
\frac{x}{\epsilon}\Lambda^{-1}+\sum_{n\geq 0}
n\Lambda^{n-1}t_{n}+\sum_{n\geq 0}
\frac{1}{ (n-1)!}  \Lambda^{n-1} (\epsilon \d -  c_{n-1} )y_{n},\\
&&\bar \Gamma=
\frac{-x}{\epsilon}\Lambda-\sum_{n\geq 0}
\frac{1}{ (n-1)!}  \Lambda^{-n+1} (-\epsilon \d -  c_{n} )y_{n}.
\end{eqnarray}
Therefore we can get
\[\notag \M-\bar \M&=&
P\frac{x}{\epsilon}\Lambda^{-1}P^{-1}+\sum_{n\geq 0}
nt_{n}B_{n-1}+
\bar P\frac{x}{\epsilon}\Lambda \bar P^{-1}+\sum_{n\geq 0}
\frac{2}{ (n-1)!}  \Lambda^{n-1} (\log \L -  c_{n} )y_{n},\]
is a pure difference operator.

Then one can prove the Lax operator $\L$ and Orlov-Schulman operators $\M,\bar \M$ satisfy the following theorem.
\begin{proposition}\label{flowsofMZ}
The $S_2$-valued Lax operator $\L$ and Orlov-Schulman operators $\M,\bar \M$ of the SCETH
satisfy the following
\begin{eqnarray}
&[\L,\M]=\L,[\L,\bar \M]=1,[\log_+\L,M]=S\Lambda^{-1} S^{-1},[\log_-\L,\bar \M]=\bar P \Lambda\bar P^{-1},\\ \label{Mequation}
&\epsilon\partial_{t_{k}} \M^m\L^k= [(B_{k})_+,\M^m\L^k], \epsilon\partial_{\bar t_{k}} \M^m\L^k= [ -(\bar B_{k})_-,\M^m\L^k],\\
&\epsilon\partial_{t_{k}} \bar \M^m\L^k= [(B_{k})_+,\bar \M^m\L^k], \epsilon\partial_{\bar t_{k}} \bar \M^m\L^k= [ -(\bar B_{k})_-,\bar \M^m\L^k],\\
 \notag
&\dfrac{\partial
\M^m\L^k}{\partial{y_{j}}}=[\frac{\L^j}{j!}(\log_+ \L-c_j) -(F_{j})_-,
\M^m\L^k],\;  \dfrac{\partial
\bar \M^m\L^k}{\partial{y_{j}}}=[-\frac{\L^j}{j!}(\log_- \L-c_j)+(F_{j})_+, \bar \M^m\L^k].\\
\end{eqnarray}
\end{proposition}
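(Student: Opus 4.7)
The plan is to reduce every identity in the proposition to an elementary bracket on the bare operators $\Lambda, x, \partial$ and then transport it to the dressed side by conjugation. The ingredients are the factorizations $\L = P\Lambda P^{-1} = \bar P\Lambda^{-1}\bar P^{-1}$, $\M = P\Gamma P^{-1}$, $\bar\M = \bar P\bar\Gamma\bar P^{-1}$, the logarithmic formulas $\log_+\L = P\epsilon\partial P^{-1}$ and $\log_-\L = -\bar P\epsilon\partial\bar P^{-1}$, the Sato equations of Definition 3.1, and the canonical relations $[\Lambda,x]=\epsilon\Lambda$, $[\partial,x]=1$, $[\partial,\Lambda]=0$.

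The first line of the proposition consists of four base brackets. Conjugating by $P$ reduces $[\L,\M]=\L$ to $[\Lambda,\Gamma]=\Lambda$; the only piece of $\Gamma$ not commuting with $\Lambda$ is its multiplication-by-$x$ contribution, and it produces $\Lambda$ via $[\Lambda,x]=\epsilon\Lambda$. The dual $[\L,\bar\M]=1$ arises analogously from $\L = \bar P\Lambda^{-1}\bar P^{-1}$ and a single commutator computation on $\bar\Gamma$. For the logarithmic brackets, $[\log_+\L,\M]$ reduces to $P[\epsilon\partial,\Gamma]P^{-1}$, and the only term of $\Gamma$ not commuting with $\partial$ is once again the $x$-piece, giving $[\epsilon\partial,\Gamma] = \Lambda^{-1}$ and the stated $P\Lambda^{-1}P^{-1}$ after dressing (the $S$ appearing in the statement is a misprint for $P$); likewise $[\epsilon\partial,\bar\Gamma] = -\Lambda$ produces $[\log_-\L,\bar\M] = \bar P\Lambda\bar P^{-1}$.

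The time-evolution identities all follow from a single template. Writing $\M^m\L^k = P\Gamma^m\Lambda^k P^{-1}$ and differentiating, the Sato equation for $P$ contributes a commutator with the $(-)$-projection of the corresponding flow generator $B_k$ (or of $\bar B_k$ for $\bar t_k$, or a mixed expression for $y_j$), while the explicit time-derivative of $\Gamma$ provides a complementary piece which, after conjugation by $P$, equals $[B_k,\M^m\L^k]$; the matching on the bare side uses $[\Lambda,\Gamma]=\Lambda$ inductively to relate $\partial_{t_k}\Gamma$ to $[\Lambda^{k+1}/(k+1)!,\Gamma]$. Summing the two contributions and invoking $B_k = (B_k)_+ + (B_k)_-$ collapses the projections and yields $[(B_k)_+,\M^m\L^k]$. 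The $\bar t_k$ equations acquire the opposite sign because the Sato equation for $\bar P$ uses $(+)$ rather than $(-)$ projection, and the $\bar\M^m\L^k$ statements are obtained by repeating the argument with $\bar P, \bar\Gamma$; the extension from $(m,k) = (1,0)$ to general $(m,k)$ is automatic via the Leibniz rule for commutators combined with the Lax equations of Proposition 3.2.

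The most delicate step is the $y_j$ flow. Here the explicit $y_j$-derivative of $\Gamma$ is a first-order operator of the form $\frac{1}{(j-1)!}\Lambda^{j-1}(\epsilon\partial - c_{j-1})$, and one must recognize its dressed version as the portion of $F_j = \frac{2\L^j}{j!}(\log\L - c_j)$ that is not captured by $(F_j)_-$. This identification is exactly where the logarithmic bracket $[\log_+\L,\M] = P\Lambda^{-1}P^{-1}$ established above enters; combined with $[\L,\M]=\L$ it lets one expand $[\frac{\L^j}{j!}(\log_+\L - c_j),\M]$ inductively and match it to the explicit bare derivative, after which the projection swap $(F_j)_+ = F_j - (F_j)_-$ closes the computation. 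The $\bar\M$ version follows by the parallel manipulation using $\log_-\L$ and $[\L,\bar\M]=1$.
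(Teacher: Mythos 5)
Your argument is essentially the paper's own proof: the author merely displays the bare commutators $[\partial_{t_n}-\Lambda^{n},\Gamma]=0$, $[\partial_{y_n}-\tfrac{1}{n!}\Lambda^{n}(\epsilon\partial-c_n),\Gamma]=0$ and $[\partial_{y_n}+\tfrac{1}{n!}\Lambda^{-n}(-\epsilon\partial-c_n),\bar\Gamma]=0$ and states that the proposition follows ``by dressing,'' which is exactly your conjugation--plus--Sato-equation template with the details you supply left implicit. The only point worth flagging is that with the paper's literal $\Gamma=\tfrac{x}{\epsilon}\Lambda^{-1}+\cdots$ one gets $[\Lambda,\Gamma]=1$ rather than $\Lambda$, so your claim that the $x$-term ``produces $\Lambda$'' silently drops the $\Lambda^{-1}$ factor; this is an inconsistency already present between the stated $[\L,\M]=\L$ and the definition \eqref{Moperator} (and likewise the $S\Lambda^{-1}S^{-1}$ versus $P\Lambda^{-1}P^{-1}$ misprint you correctly identify), not a defect of your approach.
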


\begin{proof}
One can prove the proposition by dressing the following several commutative Lie brackets
\begin{eqnarray*}&&[\partial_{ t_{n}}-\Lambda^{n},\Gamma]\\
&=&[\partial_{ t_{n}}-\Lambda^{n},\frac{x}{\epsilon}\Lambda^{-1}+\sum_{n\geq 0}
n\Lambda^{n-1}t_{n}+\sum_{n\geq 0}
\frac{1}{ (n-1)!}  \Lambda^{n-1} (\epsilon \d -  c_{n-1} )y_{n}]\\&=&0,
\end{eqnarray*}

\begin{eqnarray*}&&[\partial_{ y_{n}}-\frac{1}{ n!}  \Lambda^n (\epsilon \d -  c_n ),\Gamma]\\
&=&[\partial_{ y_{n}}-\frac{1}{ n!}  \Lambda^n (\epsilon \d -  c_n ),\frac{x}{\epsilon}\Lambda^{-1}+\sum_{n\geq 0}
n\Lambda^{n-1}t_{n}+\sum_{n\geq 0}
\frac{1}{ (n-1)!}  \Lambda^{n-1} (\epsilon \d -  c_{n-1} )y_{n}]\\&=&0,
\end{eqnarray*}

\begin{eqnarray*}&&[\partial_{ y_{n}}+\frac{1}{ n!}  \Lambda^{-n} (-\epsilon \d -  c_n ),\bar \Gamma]\\
&=&[\partial_{ y_{n}}+\frac{1}{ n!}  \Lambda^{-n} (-\epsilon \d -  c_n ),\frac{-x}{\epsilon}\Lambda-\sum_{n\geq 0}
n\Lambda^{-n+1}\bar t_{n}-\sum_{n\geq 0}
\frac{1}{ (n-1)!}  \Lambda^{-n+1} (-\epsilon \d -  c_{n} )y_{n}]\\&=&0.
\end{eqnarray*}

\end{proof}
We are now to define the additional flows, and then to
prove that they are symmetries, which are called additional
symmetries of the SCETH. We introduce additional
independent variables $t^*_l$ and define the actions of the
additional flows on the wave operators as
\begin{eqnarray}\label{definitionadditionalflowsonphi2Z}
\dfrac{\partial P}{\partial
{t^*_l}}=-\left((\M-\bar \M)\L^l\right)_{-}P, \ \ \ \dfrac{\partial
\bar P}{\partial {t^*_l}}=\left((\M-\bar \M)\L^l\right)_{+}\bar P,
\end{eqnarray}
where $m\geq 0, l\geq 0$.

Then we can derive the following proposition.
\begin{proposition}\label{add flowZ}
The additional derivatives  act on  $\M$, $\bar \M$ as
\begin{eqnarray}
\label{SCETHadditionalflow11'}
\dfrac{\partial
\M}{\partial{t^*_l}}&=&[-\left((\M-\bar \M)\L^l\right)_{-}, \M],
\\
\label{SCETHadditionalflow12}
\dfrac{\partial
\bar \M}{\partial{t^*_l}}&=&[\left((\M-\bar \M)\L^l\right)_{+}, \bar \M].
\end{eqnarray}
\end{proposition}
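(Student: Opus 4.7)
The plan is to compute the derivatives directly from the definitions $\M = P\Gamma P^{-1}$ and $\bar\M = \bar P\bar\Gamma\bar P^{-1}$, using the Sato-type action \eqref{definitionadditionalflowsonphi2Z} of $\partial_{t^*_l}$ on $P,\bar P$. Since $\Gamma$ and $\bar\Gamma$ are built only out of $x$, the shift $\Lambda$, and the original times $t_n,y_n$, they contain no explicit dependence on the additional times $t^*_l$; hence $\partial\Gamma/\partial t^*_l = 0$ and $\partial\bar\Gamma/\partial t^*_l = 0$. This observation is the pivot on which the whole argument turns.

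Granted that, I would first differentiate $\M = P\Gamma P^{-1}$ by the Leibniz rule and substitute $\partial P/\partial t^*_l = -\bigl((\M-\bar\M)\L^l\bigr)_{-}P$, together with the standard identity $\partial P^{-1}/\partial t^*_l = -P^{-1}(\partial P/\partial t^*_l)P^{-1} = P^{-1}\bigl((\M-\bar\M)\L^l\bigr)_{-}$. The two contributions combine into
\begin{equation*}
\frac{\partial \M}{\partial t^*_l}
= -\bigl((\M-\bar\M)\L^l\bigr)_{-}P\Gamma P^{-1}
+ P\Gamma P^{-1}\bigl((\M-\bar\M)\L^l\bigr)_{-}
= \bigl[-\bigl((\M-\bar\M)\L^l\bigr)_{-},\,\M\bigr],
\end{equation*}
which is exactly \eqref{SCETHadditionalflow11'}.

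The computation for $\bar\M$ is structurally identical, with the sign reversed because of the plus-projection appearing in $\partial\bar P/\partial t^*_l$. Differentiating $\bar\M = \bar P\bar\Gamma\bar P^{-1}$ and using $\partial \bar P^{-1}/\partial t^*_l = -\bar P^{-1}\bigl((\M-\bar\M)\L^l\bigr)_{+}$ yields
\begin{equation*}
\frac{\partial \bar\M}{\partial t^*_l}
= \bigl((\M-\bar\M)\L^l\bigr)_{+}\bar P\bar\Gamma\bar P^{-1}
- \bar P\bar\Gamma\bar P^{-1}\bigl((\M-\bar\M)\L^l\bigr)_{+}
= \bigl[\bigl((\M-\bar\M)\L^l\bigr)_{+},\,\bar\M\bigr],
\end{equation*}
which is \eqref{SCETHadditionalflow12}.

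There is no genuine analytic obstacle: the calculation is a direct consequence of the Sato-type formulas defining the additional flows, the Leibniz rule, and the $t^*_l$-independence of the bare operators $\Gamma,\bar\Gamma$. The only thing that must be checked carefully, and which is really the essential content, is that $\Gamma$ and $\bar\Gamma$ have no $t^*_l$-dependence; once this is granted, the two commutator identities drop out in one line each. For later use in establishing the Virasoro algebra, one would then combine these formulas with Proposition \ref{flowsofMZ} to compute brackets of additional flows among themselves, but that lies beyond the present statement.
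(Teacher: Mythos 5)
Your proof is correct: the paper states Proposition \ref{add flowZ} without supplying a proof, and your direct computation --- applying the Leibniz rule to $\M=P\Gamma P^{-1}$ and $\bar\M=\bar P\bar\Gamma\bar P^{-1}$, substituting the defining flows \eqref{definitionadditionalflowsonphi2Z} and $\partial P^{-1}/\partial t^*_l=-P^{-1}(\partial P/\partial t^*_l)P^{-1}$, and noting that $\Gamma,\bar\Gamma$ carry no explicit $t^*_l$-dependence --- is precisely the standard argument intended here. Your identification of the $t^*_l$-independence of the bare operators $\Gamma,\bar\Gamma$ as the essential point is apt, and both commutator identities follow exactly as you write them.
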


By the propositions above, we can find for $ n,k,l\geq 0$,
the following identities hold
\begin{eqnarray}\label{SCETHadditionalflow4}
 \dfrac{\partial \M^n\L^k}{\partial{t^*_l}}=-[((\M-\bar \M)\L^l)_{-}, \M^n\L^k]
,\ \ \
 \dfrac{\partial \bar \M^n\L^k}{\partial{t^*_l}}=[((\M-\bar \M)\L^l)_{+},
 \bar \M^n\L^k].
\end{eqnarray}

Basing on above results, the following theorem can be proved.

\begin{theorem}\label{symmetryz}
The additional flows $\partial_{t^*_l}$ commute
with the SCETH flows, i.e.,
\begin{eqnarray}
[\partial_{t^*_l}, \partial_{t_{n}}]\Phi=0,\ \  \ [\partial_{t^*_l}, \partial_{y_{n}}]\Phi=0,
\end{eqnarray}
where $\Phi$ can be $P$, $\bar P$ or $\L$, $1\leq\gamma\leq N; n\geq 0$ and
 $
\partial_{t^*_l}=\frac{\partial}{\partial{t^*_l}},
\partial_{t_{n}}=\frac{\partial}{\partial{t_{n}}}$.

\end{theorem}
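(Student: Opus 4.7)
The plan is to verify the commutation by directly expanding each iterated Sato-derivative on $P$, $\bar P$ and $\L$ and reducing the whole identity to a purely algebraic statement about the splitting $\g = \g_+ \oplus \g_-$. Since the flows act on $\L$ only through conjugation via the dressing formula $\L = P\Lambda P^{-1}$, the commutation on $\Phi = \L$ follows from the Leibniz rule once it is established on $\Phi = P$; moreover the argument for $\bar P$ is completely dual to the one for $P$ under the interchange $(\cdot)_+ \leftrightarrow -(\cdot)_-$, so the substantive work reduces to the two cases $[\partial_{t^*_l},\partial_{t_n}]P$ and $[\partial_{t^*_l},\partial_{y_n}]P$.

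For the $t_n$-case, abbreviate $A_l := (\M-\bar \M)\L^l$. Using \eqref{satoStz} and \eqref{definitionadditionalflowsonphi2Z} one obtains, up to an overall factor of $1/\epsilon$,
\[
[\partial_{t^*_l},\partial_{t_n}]P \;\propto\; \bigl(-(\partial_{t^*_l}B_n)_- + (\epsilon\partial_{t_n}A_l)_- + [(B_n)_-,(A_l)_-]\bigr)P.
\]
Two ingredients then need to be plugged in: Proposition \ref{flowsofMZ}, applied to $\M\L^l$ and $\bar \M\L^l$ and subtracted, supplies $\epsilon\partial_{t_n}A_l = [(B_n)_+,A_l]$, while the corollary \eqref{SCETHadditionalflow4} of Proposition \ref{add flowZ}, specialized to $\M^0\L^{n+1}$, supplies $\partial_{t^*_l}B_n = -[(A_l)_-,B_n]$. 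Substituting and decomposing $A_l = (A_l)_+ + (A_l)_-$ and $B_n = (B_n)_+ + (B_n)_-$, the three resulting $(\cdot)_-$-projected commutators telescope to zero using only the inclusions $[\g_\pm,\g_\pm]\subset\g_\pm$ and linearity of $(\cdot)_\pm$.

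For the $y_n$-case the same scheme applies with $F_n$ in place of $B_n$, but now one must first derive the analogue of $\partial_{t^*_l}B_n$ for $F_n$, and this is where I expect the main obstacle. The operator $F_n = \frac{2\L^n}{n!}(\log \L - c_n)$ involves $\log \L = \frac12(\log_+\L + \log_-\L)$, which lies in neither $\g_+$ nor $\g_-$ and so must be treated piece by piece. Differentiating the dressing formula \eqref{Log} by means of \eqref{definitionadditionalflowsonphi2Z} one finds
\[
\partial_{t^*_l}\log_+\L = -[(A_l)_-,\log_+\L], \qquad \partial_{t^*_l}\log_-\L = [(A_l)_+,\log_-\L],
\]
which together with $\partial_{t^*_l}\L^n = -[(A_l)_-,\L^n] = [(A_l)_+,\L^n]$ gives an explicit formula for $\partial_{t^*_l}F_n$. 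On the other side, Proposition \ref{flowsofMZ} supplies $\epsilon\partial_{y_n}A_l = \bigl[\tfrac{\L^n}{n!}(\log_+\L - c_n) - (F_n)_-,\,A_l\bigr]$. Substituting both into the expansion of $[\partial_{t^*_l},\partial_{y_n}]P$, the $\log_+$- and $\log_-$-contributions separate cleanly across $\g = \g_+\oplus\g_-$, and after the outer $(\cdot)_-$ projection the sum telescopes to zero by exactly the same algebraic mechanism as in the $t_n$-case. Once this step is secured, the statements for $\Phi = \bar P$ and $\Phi = \L$ follow as described above.
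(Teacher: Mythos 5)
Your proposal is correct and follows essentially the same route as the paper's proof: expand the mixed derivative via the Sato equations and the definition of the additional flows, substitute the induced evolutions of the generators (using Proposition \ref{flowsofMZ} for $\partial_{t_n}$ and $\partial_{y_n}$ acting on $(\M-\bar\M)\L^l$, and \eqref{SCETHadditionalflow4} together with the action on $\log_\pm\L$ for $\partial_{t^*_l}$ acting on $B_n$ and $F_n$), and cancel using $[\g_\pm,\g_\pm]\subset\g_\pm$. The paper carries out exactly this telescoping for $[\partial_{t^*_l},\partial_{y_n}]\bar P$ with $(\cdot)_+$ projections and leaves the remaining cases as ``similar''; your version is the dual computation on $P$ with $(\cdot)_-$ projections, plus an explicit reduction of the $\L$ and $\bar P$ cases.
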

\begin{proof}  Here
we also give the proof for commutativity of  additional symmetries with the extended
flow $\d_{y_{n}}$. To be an example, we only let the Lie bracket act on $\bar P$,
\begin{eqnarray*}
[\partial_{t^*_l},\partial_{y_{n}}]\bar P &=&
\partial_{t^*_l}(F_{j})_+\bar P -
\partial_{y_{n}} \left(((\M-\bar \M)\L^l)_{+}\bar P  \right)\\
&=& \partial_{t^*_l}(F_{j})_+ \bar P +
(F_{j})_+ (\partial_{t^*_l}
)\bar P\\&&-(\partial_{y_{n}} ((\M-\bar \M)\L^l))_{+}\bar P
-((\M-\bar \M)\L^l)_{+}(\partial_{y_{n}}\bar P ),
\end{eqnarray*}
which further leads to
\begin{eqnarray*}
[\partial_{t^*_l},\partial_{y_{n}}]\bar P
&=&[((\M-\bar \M)\L^l)_{+}, \frac{\L^j}{j!}(\log_- \L-c_j)]_{+}\bar P\\
&&-[((\M-\bar \M)\L^l)_{-}, \frac{\L^j}{j!}(\log_+ \L-c_j)]_{+}\bar P\\
&&+(F_{j})_+ ((\M-\bar \M)\L^l)_{+}\bar P -((\M-\bar \M)\L^l)_{+}(F_{j})_+\bar P\\
&&-[\left(\frac{\L^j}{j!}(\log_+ \L-c_j) \right)_+-\left(\frac{\L^j}{j!}(\log_- \L-c_j) \right)_-,(\M-\bar \M)\L^l]_{+}\bar P  \\
&=&[((\M-\bar \M)\L^l)_{+}, (\frac{\L^j}{j!}(\log_- \L-c_j))_+]_{+}\bar P\\
&&-[((\M-\bar \M)\L^l)_{-}, \frac{\L^j}{j!}(\log_+ \L-c_j)]_{+}\bar P\\
&&+(F_{j})_+ ((\M-\bar \M)\L^l)_{+}\bar P -((\M-\bar \M)\L^l)_{+}(F_{j})_+\bar P\\
&&+[(\M-\bar \M)\L^l,\left(\frac{\L^j}{j!}(\log_+ \L-c_j) \right)_+]_{+}\bar P  \\
&=&0.
\end{eqnarray*} The other cases in the theorem can be proved in similar ways.
\end{proof}
The commutative property in Theorem \ref{symmetryz} means that
additional flows are symmetries of the SCETH.
As a special reduction from the EMTH to the SCETH, it is easy to derive the algebraic
structures among these additional symmetries in the following important
theorem.
\begin{theorem}\label{WinfiniteCalgebra}
The additional flows  $\partial_{t^*_l}$ of the SCETH form a Virasoro type Lie algebra with the
following relation
 \begin{eqnarray}\label{algebra relation}
[\partial_{t^*_l},\partial_{t^*_k}]= (k- l)\d^*_{k+l-1},
\end{eqnarray}
which holds in the sense of acting on  $P$, $\bar P$ or $\L$ and  $l,k\geq 0.$
\end{theorem}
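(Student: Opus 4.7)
The plan is to verify the Virasoro bracket by a direct computation on the dressing operator $\bar P$; the analogous argument on $P$ (using $\partial_{t^*_l}P=-(A_l)_- P$) handles that case, and the identity on $\L$ then follows by conjugating $\Lambda^{\pm1}$ through the identities already established.

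Abbreviate $A_m:=(\M-\bar\M)\L^m$ so that \eqref{definitionadditionalflowsonphi2Z} reads $\partial_{t^*_l}\bar P=(A_l)_+\bar P$. Applying $\partial_{t^*_l}\partial_{t^*_k}$ by the Leibniz rule and antisymmetrising in $l,k$ gives
\[
[\partial_{t^*_l},\partial_{t^*_k}]\bar P=\Bigl((\partial_{t^*_l}A_k-\partial_{t^*_k}A_l)_{+}+\bigl[(A_l)_+,(A_k)_+\bigr]\Bigr)\bar P.
\]
The first step is to evaluate $\partial_{t^*_l}A_k$. Expanding $A_k=(\M-\bar\M)\L^k$ by Leibniz and substituting the actions $\partial_{t^*_l}\M=[-(A_l)_-,\M]$, $\partial_{t^*_l}\bar\M=[(A_l)_+,\bar\M]$ from Proposition \ref{add flowZ}, together with the consistency $\partial_{t^*_l}\L=[(A_l)_+,\L]=[-(A_l)_-,\L]$ coming from the two dressings of $\L$, the four contributions regroup via standard projection identities of the form $[X_+,Y_+]_+=[X,Y]_+-[X_-,Y]_+-[X,Y_-]_+$ into the single projected commutator
\[
[\partial_{t^*_l},\partial_{t^*_k}]\bar P=[A_l,A_k]_+\bar P.
\]

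The second and main step is to compute the operator-level commutator $[A_l,A_k]$. From $[\L,\M]=\L$ and $[\L,\bar\M]=1$ in Proposition \ref{flowsofMZ}, induction gives $[\L^m,\M]=m\L^m$ and $[\L^m,\bar\M]=m\L^{m-1}$, hence $[\L^m,\M-\bar\M]=m(\L^m-\L^{m-1})$. Plugging this into
\[
A_lA_k=(\M-\bar\M)\bigl[(\M-\bar\M)\L^l+l(\L^l-\L^{l-1})\bigr]\L^k
\]
and antisymmetrising in $l,k$ cancels the $(\M-\bar\M)^2\L^{l+k}$ piece, leaving an expression proportional to $(l-k)$ and supported on $A_{l+k}$ and $A_{l+k-1}$. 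The structural property noted just after the definition of $\Gamma,\bar\Gamma$ in Section \ref{Moperator}, namely that $\M-\bar\M$ is a pure difference operator in the strongly coupled reduction, collapses the two contributions into the single shifted flow and yields
\[
[\partial_{t^*_l},\partial_{t^*_k}]\bar P=(k-l)(A_{k+l-1})_+\bar P=(k-l)\,\partial_{t^*_{k+l-1}}\bar P,
\]
as asserted.

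The main obstacle is the last absorption step: organising the $(\cdot)_\pm$ bookkeeping so that the apparent independent contribution from $A_{l+k}$ does not produce a second Virasoro generator but is fully encoded in $\partial_{t^*_{k+l-1}}$, using precisely the pure-difference nature of $\M-\bar\M$ for the $S_2$-valued reduction. Once this is carried out for $\bar P$, the mirror computation for $P$ (with signs flipped because $\partial_{t^*_l}P=-(A_l)_-P$) gives the same bracket, and since $\L=P\Lambda P^{-1}=\bar P\Lambda^{-1}\bar P^{-1}$, both derivations induce the identical bracket on $\L$, completing the theorem.
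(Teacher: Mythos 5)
The paper does not actually supply a proof of Theorem \ref{WinfiniteCalgebra}: it is asserted by appeal to the reduction from the EMTH, so your argument has to stand on its own. As written it has a genuine gap at exactly the step you yourself flag as ``the main obstacle.'' Write $N=\M-\bar \M$ and $A_m=N\L^m$. The relations $[\L,\M]=\L$, $[\L,\bar \M]=1$ that you quote from Proposition \ref{flowsofMZ} give $[\L^m,N]=m(\L^m-\L^{m-1})$ and hence
\begin{equation*}
[A_l,A_k]=(l-k)\bigl(A_{l+k}-A_{l+k-1}\bigr),
\end{equation*}
a two-term right-hand side. The flows generated by $(A_{l+k})_+$ and $(A_{l+k-1})_+$ are independent, so no bookkeeping of the $(\cdot)_\pm$ projections can convert the first term into the second, and the observation that $N$ is a pure difference operator is beside the point: that property controls whether $\epsilon\d$ survives in $A_m$, not which power of $\L$ it multiplies. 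What your computation actually yields is $[\d_{t^*_l},\d_{t^*_k}]=(l-k)\d_{t^*_{l+k}}+(k-l)\d_{t^*_{l+k-1}}$, which is not the stated relation. To land on the single generator $A_{k+l-1}$ you would need $[\L,N]$ to be central (a scalar multiple of $\I_2$), so that $[\L^m,N]\propto\L^{m-1}$ with no $\L^m$ contribution; that requires going back to $[\Lambda,\Gamma]$ and $[\Lambda^{-1},\bar\Gamma]$ and re-deriving (in effect correcting) the commutators of Proposition \ref{flowsofMZ}, rather than ``absorbing'' the extra term.

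There is a second, related inconsistency. You invoke $\d_{t^*_l}\L=[(A_l)_+,\L]=[-(A_l)_-,\L]$; equality of these two expressions forces $[A_l,\L]=0$, i.e.\ $[\L,\M]=[\L,\bar\M]$, which contradicts $[\L,\M]=\L$, $[\L,\bar\M]=1$ used two lines later. You cannot use both: whichever pair of commutators is correct changes $[A_l,A_k]$ materially (if $[\L,N]=0$ the naive operator bracket vanishes identically and the entire Virasoro structure must instead come from the cross terms of the form $[A_l,\bar\M]\L^k$ together with the projection corrections, as in the Adler--Shiota--van Moerbeke type computation). The first step of your argument --- reducing $[\d_{t^*_l},\d_{t^*_k}]\bar P$ to a projected commutator via the standard identities --- is fine in outline; the proof breaks down precisely in the algebra of $\M$, $\bar\M$, $\L$, which is where the whole content of the theorem lies.
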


\section{Virasoro action on tau-functions of SCETH}

Introduce the following sequence:
\[t-[\lambda] &:=& (t_{j}-
  \epsilon(j-1)!\lambda^j, 0\leq j\leq \infty).
\]
A $S_2$-valued function $\begin{bmatrix}\tau& \sigma\\ \sigma & \tau\end{bmatrix}\in S_2$  depending only on the dynamical variables $t$ and
$\epsilon$ is called the  $S_2$-valued tau-function of the SCETH if it
provides symbols related to matrix-valued wave operators as following,

\begin{eqnarray}\label{Mpltaukk}\bP_0: &=&\frac{ \tau
(t-[\lambda^{-1}])\tau(t)- \sigma
(t-[\lambda^{-1}])\sigma(t)}
     {\tau^2(t)-\sigma^2(t)},\\
     \bP_1: &=&\frac{\sigma
(t-[\lambda^{-1}])\tau(t)- \tau
(t-[\lambda^{-1}])\sigma(t) }
     {\tau^2(t)-\sigma^2(t)},\\
     \label{Mpl-1taukk}\bP_2: &=&\frac{ \tau
(x+\epsilon,t+[\lambda^{-1}])\tau(x+\epsilon,t)- \sigma
(x+\epsilon,t+[\lambda^{-1}])\sigma(x+\epsilon,t)}
     {\tau^2(x+\epsilon,t)-\sigma^2(x+\epsilon,t)},\\
         \label{Mpl-1taukk}\bP_3: &=&\frac{ \sigma
(x+\epsilon,t+[\lambda^{-1}])\sigma(x+\epsilon,t)-\tau
(x+\epsilon,t+[\lambda^{-1}])\tau(x+\epsilon,t)}
     {\tau^2(x+\epsilon,t)-\sigma^2(x+\epsilon,t)},\\
     \label{Mpltaukk}\bar\bP_0: &=&\frac{ \tau
(x+\epsilon,t+[\lambda])\tau(t)- \sigma
(x+\epsilon,t+[\lambda])\sigma(t)}
     {\tau^2(t)-\sigma^2(t)},\\
     \bar\bP_1: &=&\frac{\sigma
(x+\epsilon,t+[\lambda])\tau(t)- \tau
(x+\epsilon,t+[\lambda])\sigma(t) }
     {\tau^2(t)-\sigma^2(t)},\\
     \label{Mpl-1taukk}\bar\bP_2: &=&\frac{ \tau
(x,t-[\lambda^{-1}])\tau(x+\epsilon,t)- \sigma
(x,t-[\lambda^{-1}])\sigma(x+\epsilon,t)}
     {\tau^2(x+\epsilon,t)-\sigma^2(x+\epsilon,t)},\\
         \label{Mpl-1taukk}\bar\bP_3: &=&\frac{ \sigma
(x,t-[\lambda^{-1}])\sigma(x+\epsilon,t)-\tau
(x,t-[\lambda^{-1}])\tau(x+\epsilon,t)}
     {\tau^2(x+\epsilon,t)-\sigma^2(x+\epsilon,t)}.
     \end{eqnarray}

Then  according to the ASvM formula in \cite{asv2} and a commutative algebraic reduction, we can get the following formula
\[\frac{(\d_{t^*_{k}}\bP_0) \bP_0-(\d_{t^*_{k}}\bP_1) \bP_1}{\bP_0^2-\bP_1^2}&=&(e^{-\sum_{i=1}^\infty \epsilon(i-1)!\lambda^{-i}\d_{t_i}}-1)\frac{(L_{k-1}\tau)\tau-(L_{k-1}\sigma)\sigma}{\tau^2-\sigma^2},\\
\frac{(\d_{t^*_{k}}\bP_1) \bP_0-(\d_{t^*_{k}}\bP_0) \bP_1}{\bP_0^2-\bP_1^2}&=&(e^{-\sum_{i=1}^\infty \epsilon(i-1)!\lambda^{-i}\d_{t_i}}-1)\frac{(L_{k-1}\sigma)\tau-(L_{k-1}\tau)\sigma}{\tau^2-\sigma^2},\\
\frac{(\d_{t^*_{k}}\bar\bP_0) \bar\bP_0-(\d_{t^*_{k}}\bar\bP_1) \bar\bP_1}{\bar\bP_0^2-\bar\bP_1^2}&=&(e^{\epsilon\d_x+\sum_{i=1}^\infty \epsilon(i-1)!\lambda^i\d_{t_i}}-1)\frac{(L_{k-1}\tau)\tau-(L_{k-1}\sigma)\sigma}{\tau^2-\sigma^2},\\
\frac{(\d_{t^*_{k}}\bar\bP_1) \bar\bP_0-(\d_{t^*_{k}}\bar\bP_0) \bar\bP_1}{\bar\bP_0^2-\bar\bP_1^2}&=&(e^{\epsilon\d_x+\sum_{i=1}^\infty \epsilon(i-1)!\lambda^i\d_{t_i}}-1)\frac{(L_{k-1}\sigma)\tau-(L_{k-1}\tau)\sigma}{\tau^2-\sigma^2},
\]
where
\[L_{-1}&=&\sum_{n=1}^\infty(t_{n-1}\d_{t_{n-2}}+2\frac{c_{n-1}}{(n-1)!}y_n\d_{t_{n-1}}-2c_{n-1}y_n\d_{t_{n-2}}+y_n\d_{y_{n-1}})+t_0y_0,\\
L_{0}&=&\sum_{n=1}^\infty(nt_{n-1}\d_{t_{n-1}}+2\frac{c_n}{(n-1)!}y_n\d_{t_{n}}-2c_{n-1}y_n\d_{t_{n-1}}+ny_n\d_{y_{n}})+y_0^2,\\ \notag
L_{p}&=&\sum_{n=1}^\infty(\frac{(n+p)!}{(n-1)!}t_{n-1}\d_{t_{n+p-1}}+2\frac{c_{n+p}}{(n-1)!}y_n\d_{t_{n+p}}-2\frac{(n+p)!}{(n-1)!}c_{n-1}y_n\d_{t_{n+p-1}}\\
&&+\frac{(n+p)!}{(n-1)!}y_n\d_{y_{n+p}})+2p!y_0\d_{t_{p-1}}+\sum_{n=1}^{p-1}n!(p-n)!\d_{t_{n-1}}\d_{t_{p-n-1}},\ \  p\geq 1.\]

These operators $\{L_k,\ k\geq -1\}$ constitute a Virasoro algebra \cite{cmp,thesis} (one half without the cental extension) as
\[[L_m,L_n]=(m-n)L_{m+n}.\]
The central extension appears only if we consider the action on the
tau function as it was done in \cite{CT1,CT2}.

\section{Multi-fold Darboux transformations of the SCETH}
In this section, we will consider the Darboux transformation of the SCETH on the Lax operator
  \[\label{1darbouxL}L^{[1]}=WL W^{-1},\]
where $W$ is the Darboux transformation operator.

That means after the Darboux transformation, the spectral problem

\[L\phi=\Lambda\phi+u\phi+ v
\Lambda^{-1}\phi=\lambda\phi,\]
will become

\[L^{[1]}\phi^{[1]}=\Lambda\phi^{[1]}+u^{[1]}\phi^{[1]}+v^{[1]}
\Lambda^{-1}\phi^{[1]}=\lambda\phi^{[1]}.\]

To keep the Lax equation of the SCETH invariant, i.e.
 The  Lax equations of the SCETH are as follows
   \begin{align}
\label{laxtjkS}
  \epsilon\partial_{t_{j}} \L^{[1]}&= [(B_{j}^{[1]})_+,\L^{[1]}],&
  \epsilon\partial_{y_{j}} \L^{[1]}&= [(F_{j}^{[1]})_+,\L^{[1]}],
  \end{align}
\begin{equation}
 \ B_{j}^{[1]}:=B_{j}(\L^{[1]}),\ \ F_{j}^{[1]}:=F_{j}(\L^{[1]}),
\end{equation} the dressing operator $W$ should satisfy the following  equation
\[W_{t_{j}}=-W(B_{j})_++(WB_{j}W^{-1})_+W,\ W_{y_{j}}=-W(F_{j})_++(WF_{j}W^{-1})_+W, \ j\geq 0.\]

Now, we will give the following important theorem which will be used to generate new solutions.

\begin{theorem}
If $\phi=\begin{bmatrix}\phi_0&\phi_1\\ \phi_1&\phi_0\end{bmatrix}$ is the first wave function of the SCETH,
the Darboux transformation operator of the SCETH
 \[W(\lambda)=(\I_2-\frac{\phi}{\La^{-1}\phi}\La^{-1})=\phi\circ(\I_2-\La^{-1})\circ\phi^{-1},\]

will generater new solutions

\[\label{1uN-11}u_0^{[1]}&=&u_0+(\La-\I_2)\frac{\phi_0(x)\phi_0(x-\epsilon)-\phi_1(x)\phi_1(x-\epsilon)}{\phi_0^2(x-\epsilon)-\phi_1^2(x-\epsilon)},\\
u_1^{[1]}&=&u_1+(\La-\I_2)\frac{\phi_1(x)\phi_0(x-\epsilon)-\phi_0(x)\phi_1(x-\epsilon)}{\phi_0^2(x-\epsilon)-\phi_1^2(x-\epsilon)},\]

{\tiny\[\notag
&&v_0^{[1]}=\frac{(\phi_0(x)\phi_0(x-2\epsilon)+\phi_1(x)\phi_1(x-2\epsilon))(\phi_0^2(x-\epsilon)+\phi_1^2(x-\epsilon))-(\phi_0(x)\phi_1(x-2\epsilon)
+\phi_1(x)\phi_0(x-2\epsilon))(2\phi_0(x-\epsilon)\phi_1(x-\epsilon))}
{(\phi_0^2(x-\epsilon)+\phi_1^2(x-\epsilon))^2-(2\phi_0(x-\epsilon)\phi_1(x-\epsilon))^2}v_0(x-\epsilon)\\ \notag
&&+\frac{(\phi_0(x)\phi_1(x-2\epsilon)
+\phi_1(x)\phi_0(x-2\epsilon))(\phi_0^2(x-\epsilon)+\phi_1^2(x-\epsilon))-(\phi_0(x)\phi_0(x-2\epsilon)+\phi_1(x)\phi_1(x-2\epsilon))(2\phi_0(x-\epsilon)\phi_1(x-\epsilon))}
{(\phi_0^2(x-\epsilon)+\phi_1^2(x-\epsilon))^2-(2\phi_0(x-\epsilon)\phi_1(x-\epsilon))^2}v_1(x-\epsilon),\\ \notag
&&v_1^{[1]}=\frac{(\phi_0(x)\phi_0(x-2\epsilon)+\phi_1(x)\phi_1(x-2\epsilon))(\phi_0^2(x-\epsilon)+\phi_1^2(x-\epsilon))-(\phi_0(x)\phi_1(x-2\epsilon)
+\phi_1(x)\phi_0(x-2\epsilon))(2\phi_0(x-\epsilon)\phi_1(x-\epsilon))}
{(\phi_0^2(x-\epsilon)+\phi_1^2(x-\epsilon))^2-(2\phi_0(x-\epsilon)\phi_1(x-\epsilon))^2}v_1(x-\epsilon)\\ \notag
&&+\frac{(\phi_0(x)\phi_1(x-2\epsilon)
+\phi_1(x)\phi_0(x-2\epsilon))(\phi_0^2(x-\epsilon)+\phi_1^2(x-\epsilon))-(\phi_0(x)\phi_0(x-2\epsilon)+\phi_1(x)\phi_1(x-2\epsilon))(2\phi_0(x-\epsilon)\phi_1(x-\epsilon))}
{(\phi_0^2(x-\epsilon)+\phi_1^2(x-\epsilon))^2-(2\phi_0(x-\epsilon)\phi_1(x-\epsilon))^2}v_0(x-\epsilon).\]}

\end{theorem}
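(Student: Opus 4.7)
The strategy is to establish $W(\lambda)$ as a genuine Darboux operator by proving the intertwining relation $L^{[1]}W = W L$, then read off $u^{[1]}, v^{[1]}$ by matching powers of $\Lambda$, and finally convert the resulting $S_2$-valued expressions into their scalar components. The first observation is that, writing $A(x) := \phi(x)\phi(x-\epsilon)^{-1} \in S_2$, the operator $W = \I_2 - A(x)\Lambda^{-1}$ annihilates $\phi$, since $W\phi = \phi(x) - \phi(x)\phi(x-\epsilon)^{-1}\phi(x-\epsilon) = 0$. Because $L\phi = \lambda\phi$, this forces $W L W^{-1}$ to again be a difference operator of the form $\Lambda + u^{[1]} + v^{[1]}\Lambda^{-1}$: any term below $\Lambda^{-1}$ is killed by the boundary identity $W\phi = 0$, and no positive powers appear because $W$ and $W^{-1}$ are of non-positive degree in $\Lambda$ with leading term $\I_2$.

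Next I would expand the intertwining equation $(\Lambda + u^{[1]} + v^{[1]}\Lambda^{-1})(\I_2 - A(x)\Lambda^{-1}) = (\I_2 - A(x)\Lambda^{-1})(\Lambda + u(x) + v(x)\Lambda^{-1})$, using $\Lambda A(x) = A(x+\epsilon)\Lambda$ throughout. Matching the coefficient of $\Lambda^{0}$ gives $u^{[1]} = u + (\Lambda - \I_2)A(x)$, matching $\Lambda^{-2}$ gives $v^{[1]} = A(x)\,v(x-\epsilon)\,A(x-\epsilon)^{-1}$, and the $\Lambda^{-1}$ equation is automatic once $W\phi = 0$ is imposed. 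Because $S_2$ is commutative, the latter can be rearranged to $v^{[1]} = \phi(x)\phi(x-2\epsilon)\phi(x-\epsilon)^{-2}\,v(x-\epsilon)$.

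The explicit componentwise formulas then emerge from a direct $2\times 2$ matrix calculation using the fact that any $\xi = \xi_0\I_2 + \xi_1\Gamma \in S_2$ has inverse $\xi^{-1} = (\xi_0^2 - \xi_1^2)^{-1}(\xi_0\I_2 - \xi_1\Gamma)$. Computing $A(x) = \phi(x)\phi(x-\epsilon)^{-1}$ directly yields the two scalar entries appearing in $u_0^{[1]}, u_1^{[1]}$ after applying $\Lambda - \I_2$. For $v^{[1]}$, one observes that $\phi(x-\epsilon)^{-2}$ has diagonal entry proportional to $\phi_0^2(x-\epsilon) + \phi_1^2(x-\epsilon)$ and off-diagonal entry proportional to $-2\phi_0(x-\epsilon)\phi_1(x-\epsilon)$; combined with the product $\phi(x)\phi(x-2\epsilon) \in S_2$ and the algebraic identity $(\phi_0^2 - \phi_1^2)^2 = (\phi_0^2 + \phi_1^2)^2 - (2\phi_0\phi_1)^2$, this immediately recovers the stated denominator $(\phi_0^2(x-\epsilon) + \phi_1^2(x-\epsilon))^2 - (2\phi_0(x-\epsilon)\phi_1(x-\epsilon))^2$ and the stated numerators of $v_0^{[1]}, v_1^{[1]}$.

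The main obstacle is ensuring $W$ also covariantly intertwines the SCETH time flows, so that $L^{[1]}$ satisfies the Lax equations \eqref{laxtjkS}. This reduces to verifying the operator equations $\epsilon\partial_{t_j}W = -W(B_j)_+ + (B_j^{[1]})_+ W$ and their $y_j$-analogues. The check uses that $\phi$, being a wave function, evolves as $\epsilon\partial_{t_j}\phi = (B_j)_+\phi$ and $\epsilon\partial_{y_j}\phi = (F_j)_+\phi$; differentiating the relation $W\phi = 0$ along each flow and projecting onto $G_{-}$ produces the required equations for $W$ after a standard but somewhat involved manipulation. Once this covariance is in place, the Darboux identity $L^{[1]} = W L W^{-1}$ combined with the coefficient matching of the previous paragraphs completes the proof.
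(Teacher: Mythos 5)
Your proposal is correct and is essentially the verification the paper leaves implicit: the theorem is stated in the paper without any proof, and your argument — matching coefficients of $\Lambda^{0}$ and $\Lambda^{-2}$ in $L^{[1]}W=WL$, observing that the $\Lambda^{-1}$ equation is the consistency condition enforced by $W\phi=0$ and $L\phi=\lambda\phi$, and then expanding $A(x)=\phi(x)\phi(x-\epsilon)^{-1}$ and $\phi(x)\phi(x-2\epsilon)\phi(x-\epsilon)^{-2}v(x-\epsilon)$ in the commutative algebra $S_2=\C[\Gamma]$ — reproduces exactly the stated formulas for $u_0^{[1]},u_1^{[1]},v_0^{[1]},v_1^{[1]}$. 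The covariance with the $t_j$ and $y_j$ flows is handled by the same standard dressing argument the paper merely asserts, so nothing essential is missing.
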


Define $ \phi_i=\phi_i^{[0]}:=\phi|_{\lambda=\lambda_i}=\begin{bmatrix}\phi_{i0}&\phi_{i1}\\ \phi_{i1}&\phi_{i0}\end{bmatrix}$, then after iteration on Darboux transformations, we can generalize the Darboux transformation to the $n$-fold case.

Taking seed solution $u_0=u_1=0,v_0=1,v_1=0$, then after iteration on Darboux transformations,  one can get the $n$-th new solution of the SCETH as

\[u^{[n]}&=&\frac12(1-\La^{-1})\d_{t_{2,0}}(\log (\tau_n+\sigma_n)+\log (\tau_n-\sigma_n)),\\
v^{[n]}&=&e^{\frac12(1-\La^{-1})^2(\log (\tau_n+\sigma_n)-\log (\tau_n-\sigma_n))},\]
\[\tau_n&=&\sum_{m=0}^{[\frac n2]}\sum_{ \sum_{i=1}^n\alpha_i=2m}Wr (\phi_1^{(\alpha_1)},\phi_2^{(\alpha_2)},\dots\phi_n^{(\alpha_n)}),\  \\
\sigma_n&=& \sum_{m=0}^{[\frac n2]}\sum_{ \sum_{i=1}^n\alpha_i=2m+1}Wr (\phi_1^{(\alpha_1)},\phi_2^{(\alpha_2)},\dots\phi_n^{(\alpha_n)}),\  \alpha_i=0,1; \phi_i^{(0)}=\phi_{i0},\ \phi_i^{(1)}=\phi_{i1},
\]

where $Wr (\phi_1^{(\alpha_1)},\phi_2^{(\alpha_2)},\dots\phi_n^{(\alpha_n)})$ is the Wronskian, i.e. a Casorati determinant
\[Wr (\phi_1^{(\alpha_1)},\phi_2^{(\alpha_2)},\dots\phi_n^{(\alpha_n)})=det (\La^{-j+1} \phi_{n+1-i}^{(\alpha_{n+1-i})})_{1\leq i,j\leq n},\]
Particularly for the SCETH, choosing appropriate wave function $\phi$, the $n$-th new solutions can be solitary wave solutions, i.e. $n$-soliton solutions.

{\bf {Acknowledgements:}}
  This work is funded by the National Natural Science Foundation of China under Grant No.  11571192,  and K. C. Wong Magna Fund in
Ningbo University.


\vskip20pt


\begin{thebibliography}{AAA1}
\frenchspacing

\bibitem{NLKP}D. R. Lebedev and Yu. I. Manin, Conservation Laws and Lax Representation of Benney's Long wave Equations, Phys. Lett. A 74 (1979),  154-156.

\bibitem{NLKP2}D. R. Lebedev, Yu. I. Manin, The Benny equations of long waves II. The Lax representation and the conservation laws, J. Soviet Math., 21(1983), 769-776.

\bibitem{NLKP3}D. R. Lebedev, Yu. I. Manin, Gel'fand-Dikii Hamiltonian operator and the coadjoint representation of the volterra group, Funct. Anal. Appl., 13(1979), 268-273.

\bibitem{NLTODA}D. Lebedev, A. Orlov, S. Pakuliak, A. Zabrodin, Non-local integrable equations as reductions of the Toda hierarchy, Physics Letters A 160(1991), 166-172.
 \bibitem{NLTODA2}  A. Buryak, P. Rossi, Simple Lax Description of the ILW Hierarchy, SIGMA, 14(2018), 120.


\bibitem{Toda}
M. Toda, Vibration of a chain with nonlinear interaction. J. Phys. Soc.
Jpn. 22(1967), 431-436.
\bibitem{Todabook}
M. Toda, Nonlinear waves and solitons(Kluwer Academic Publishers,
Dordrecht, Holland, 1989).


\bibitem{witten}
E. Witten, Two-dimensional gravity and intersection theory on moduli
space, Surveys in differential geometry, 1(1991), 243-310.

\bibitem{dubrovin} B. A.
Dubrovin, Geometry of 2D topological field theories, in
\emph{Integrable systems and quantum groups} (Montecatini Terme,
1993), 120-348, Lecture Notes in Math., 1620, Springer, Berlin,
1996.




 \bibitem{UT}K. Ueno, K. Takasaki, Toda lattice hierarchy,
In \emph{``Group representations and systems of differential
equations'' (Tokyo, 1982)}, 1-95, Adv. Stud. Pure Math. 4,
North-Holland, Amsterdam, 1984.






\bibitem{CDZ}
G. Carlet, B. Dubrovin, Y. Zhang,  The Extended Toda Hierarchy,
Moscow Mathematical Journal  4 (2004), 313-332,.



\bibitem{M}
T. Milanov, Hirota quadratic equations for the extended Toda
hierarchy, Duke Math. J. 138 (2007), 161-178.


\bibitem{C}
G. Carlet, The extended bigraded Toda hierarchy, Journal of Physics A: Mathematical and Theoretical 39
(2006), 9411-9435.

\bibitem{solutionBTH} C. Z. Li, Solutions of  bigraded Toda hierarchy,
Journal of Physics A: Mathematical and Theoretical 44(2011), 255201.


\bibitem{TH}
T. Milanov, H. H. Tseng, The spaces of Laurent polynomials,
$\mathbb{P}^1$-orbifolds, and integrable hierarchies,
 Journal f\"ur die reine und angewandte
Mathematik 622 (2008), 189-235.


\bibitem{ourJMP}
 C. Z. Li, J. S. He, K. Wu, Y. Cheng,  Tau function and  Hirota bilinear equations for the extended  bigraded Toda
 Hierarchy, J. Math. Phys. 51(2010),043514.





 \bibitem{dispBTH}
 C. Z. Li, J. S. He, Dispersionless bigraded Toda hierarchy and its additional symmetry, Reviews in Mathematical Physics 24(2012), 1230003.




\bibitem{EMTH}C. Z. Li, J. S. He, The extended multi-component Toda hierarchy, Math. Phys., Analysis and Geometry. 17(2014), 377-407.



\bibitem{EZTH}C. Z. Li, J. S. He, On the extended $Z_N$-Toda hierarchy,  Theoretical and Mathematical Physics, 185 (2015), 289-312.

 \bibitem{os1}A. Yu. Orlov, E. I. Schulman, Additional symmetries of integrable equations and
conformal algebra reprensentaion, Lett. Math. Phys. 12(1986), 171-179.

 \bibitem{dickey1}L. A. Dickey, Soliton Equations and Hamiltonian Systems. Adv. Series in Math.
Phys. 12, World Scientific, 1991.
 \bibitem{dickey2}L. A. Dickey, Additional symmetries of the Zakharov-Shabat hierarchy, String equation and Isomonodromy,
  Lett. Math. Phys.  44(1998), 53-65.

\bibitem{cmp}B. Dubrovin, Y. Zhang, Virasoro Symmetries of the Extended Toda Hierarchy, Commun. Math. Phys. 250(2004), 161-193.

\bibitem{ourBlock}
 C. Z. Li, J. S. He, Y. C. Su, Block type symmetry of bigraded Toda hierarchy,
J. Math. Phys. 53(2012), 013517.


\bibitem{asv2}M. Adler, T. Shiota, P. van Moerbeke, A Lax representation for the Vertex operator
and the central extension, Comm. Math. Phys. 171(1995), 547-588.

\bibitem{thesis} W. B. Wheeless, Additional symmetries of the extended Toda hierarchy. Ph. D. thesis
(2015), NC State University; http://www.lib.ncsu.edu/resolver/1840.16/10461.



\bibitem{CT1}A. Y. Orlov, Vertex operator, ${\bar\partial}$-problem, variational identities and Hamiltonian formalism for 2+1D integrable systems,
 Turbulent Processes in Physics/ed. V. Baryakhtar. Singapore: World Scientific, 1988.

\bibitem{CT2}P. G. Grinevich, A. Y. Orlov, Virasoro Action on Riemann Surfaces, Grassmannians, det and Segal-Wilson $\tau$-Function,
Problems of Modern Quantum Field Theory, 86-106 (1989)



\end{thebibliography}
\end{document}